\newtheorem{theorem}{Theorem}
\newtheorem{corollary}{Corollary}
\newtheorem{proof}{Proof}
\begin{document}

\title{Multiuser I-MMSE}
\author{Samah A. M. Ghanem, \textit{Senior Member, IEEE}\\
}
\maketitle
\begin{abstract}
\boldmath
In this paper, we generalize the fundamental relation between the derivative of the mutual information and the minimum mean squared error (MMSE) to multiuser setups. We prove that the derivative of the mutual information with respect to the signal to noise ratio (SNR) is equal to the MMSE plus a covariance induced due to the interference, quantified by a term with respect to the cross correlation of the multiuser input estimates, the channels and the precoding matrices. We also derive new relations for the gradient of the conditional and non-conditional mutual information with respect to the MMSE. Capitalizing on the new fundamental relations, we derive closed form expressions of the mutual information for the multiuser channels, particularly the two user multiple access Gaussian channel driven by binary phase shift keying (BPSK) to illustrate and shed light on methods to derive similar expressions for higher level constellations. We capitalize on the new unveiled relation to derive the multiuser MMSE and mutual information in the low-SNR regime.
\end{abstract}
\begin{IEEEkeywords}
Estimation Theory; Gradient of conditional mutual information; Gradient of non-conditional mutual information; Gradient of joint mutual information; Information Theory; Interference; MAC; MMSE; Mutiuser I-MMSE; Mutual Information.
\end{IEEEkeywords}


\section{Introduction}
Connections between information theory and estimation theory dates back to the work of Duncan, in \cite{Duncan} who showed that for the continuous-time additive white Gaussian noise (AWGN) channel, the filtering minimum mean squared error 
(causal estimation) is twice the input output mutual information for any underlying signal distribution. Recently, Guo, Shamai, and Verdu have illuminated intimate connections between information theory and estimation theory in a seminal paper, \cite{34}. In particular, Guo et al. have shown that in the classical problem of information transmission through the conventional AWGN channel, the  derivative of the mutual information with respect to the SNR is equal to the smoothing minimum mean squared error (noncausal estimation); a relationship that holds for scalar, point-to-point vector, discrete-time and continuous-time channels regardless of the input statistics. The relevance of these recent connections comes from the fact that mutual information and MMSE are two canonical operational measures in information theory and estimation theory: mutual information measures the reliable information transmission rate between the input and the output of a system for a specific signaling scheme, while MMSE measures the minimum mean squared error in estimating the input given the output. Later Palomar and Verdu generalized this relation to linear vector Gaussian channels \cite{35}, \cite{36}.  The mutual information was also represented as an integral of a certain measure of the estimation error in Poisson channels \cite{67}, \cite{68}. There have been extensions of these results to the case of mismatched input distributions in the scalar Gaussian channel in \cite{VerduS} and \cite{weissman}. Most recently, Ghanem in \cite{98}, \cite{newSamah}, derived the gradient of the mutual information with respect to arbitrary parameters for the multiple access Gaussian channels, a relation that extends the relation for the case of mutually interfering inputs in linear vector Gaussian channels to the case of multiple non-mutually interfering inputs and with mutual interference, a starting point to the results in this work. The implications of a framework involving key quantities in information theory and estimation theory are countless both from the theoretical \cite{37}, \cite{38} and the more practical perspective, \cite{43}, \cite{39}, \cite{107}, \cite{SamahWiMob16}, \cite{SamahMCPjournal}.

The intimate connection between information measures and estimation measures allow few explicit closed form expressions of the mutual information for binary inputs to be derived, particularliy ones for BPSK and QPSK over the Single Input Signle Output (SISO) channel,~\cite{34}, \cite{samahMAPtele}, \cite{samahthesis}. Therefore, it is of particular importance to address connections between information theory and estimation theory for the multiuser case in order to understand the communication framework under such inputs and try to provide explicit forms when multiple accessing and interfering inputs coexist.

In this paper, we first revisit the connections between the mutual information and the MMSE for the multiuser setup, see also \cite{newSamah},~\cite{SamahMCPjournal}. Therefore, the fundamental relation between the derivative of the mutual information and the MMSE, known as I-MMSE identity, and defined for point to point channels with any noise or input distributions in \cite{34} is not anymore suitable for the multiuser case. Therefore, we generalize the I-MMSE relation to the multiuser case. Moreover, we generalize the relations for linear vector Gaussian channels in \cite{35} to multiuser channels where we extend these relations to the per-user gradient of the mutual information with respect to the MMSE, channels and precoders (power allocation) matrices of the user and the interferers. Then, we derive new closed form expressions for the mutual information for single user and mutiuser scalar Gaussian channels driven by BPSK inputs. Further, we analyze the MAC Gaussian channel model at the asymptotic regime of low SNR and capitalize on the new unveiled connections between the mutual information and the MMSE to derive the low SNR expansion of the mutual information in a multiuser setup. 

The implications of the derived relations in its two-user's version presented in this paper, or in its $K$-users version \cite{Ghanem16}, where the effect of $K-1$ interferers is also characterized, are many-fold, whether to characterize the capacity of interference channels, or to characterize novel schemes that extends state of art power allocation \cite{39} to ones that can capitalize on the characterization of the interference effect \cite{98}. Additionally, the extension of such relation to wireless networks with noisy coded flows is of particular importance \cite{GhanemNetCode16}, \cite{GhanemMedardJournal2017}. For instance, a novel piggybacking capacity achieving scheme is recently proposed for networks with Amplify and Forward (AF), \cite{GhanemMedardISIT17} capitalizing on this result and its characterization of the gap from the cut-set upper bound.

Throughout the paper, the following notation is employed, boldface uppercase letters denote matrices, lowercase letters denote scalars. The superscript, $(.)^{-1}$, $(.)^{T}$, $(.)^{*}$, and $(.)^{\dag}$ denote the inverse, transpose, conjugate, and conjugate transpose operations. The $(\nabla)$ denotes the gradient of a scalar function with respect to a variable.  The $\mathbb{E}[.]$ denotes the expectation operator. The $||.||$ and $Tr\left\{.\right\}$ denote the Euclidean norm, and the trace of a matrix, respectively.

The rest of the paper is organized as follows; section II introduces the system model. Section III introduces the new fundamental relations between the mutual information and the MMSE. Section IV introduces the new closed form expression of the mutual inforamtion. Section V introduces analysis at the asymptotic regime of low SNR. Section VI concludes the paper.

\section{System Model}
Consider the deterministic complex-valued vector channel,
\begin{equation}
\label{1}
{\bf{y}}=\sqrt{snr}~{\bf{H_1P_1}}{\bf{x_1}}+\sqrt{snr}~{\bf{H_2P_2}}\bf{x_2}+\bf{n},
\end{equation}

where the $n_r~\times~1$ dimensional vector $\bf{y}$ and the $n_t~\times~1$ dimensional vectors $\bf{x_1}$, $\bf{x_2}$ represent, respectively, the received vector and the independent zero-mean unit-variance transmitted information vectors from each user input to the MAC channel. The distributions of both inputs are not fixed, not necessarily Gaussian nor identical. The $n_r \times n_t$ complex-valued  matrices $\bf{H_1}$, $\bf{H_2}$ correspond to the deterministic channel gains for both input channels (known to both encoder and decoder) and $\bf{n} \sim \mathcal{CN}(0,I)$ is the $n_r \times 1$ dimensional complex Gaussian noise with independent zero-mean unit-variance components. 

\section{New Fundamental Relations between the Mutual Information and the MMSE}
The first contribution is given in the following theorem, which provides a generlization of the I-MMSE identity to the multiuser case and traverses back to the same identilty of the single user case.
\begin{theorem}
\label{theorem1.0}
The relation between the derivative of the joint mutual information with respect to the snr and the total non-linear MMSE for a multiuser Gaussian channel satisfies:
\begin{equation}
\frac{dI(snr)}{dsnr}=mmse(snr)+\psi(snr)
\end{equation}
\end{theorem}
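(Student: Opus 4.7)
My strategy is to stack the two user inputs into a single super-vector, apply the classical vector I-MMSE identity, and then read off the claimed decomposition by grouping the diagonal and off-diagonal blocks of the resulting error covariance. Concretely, let $\tilde{\mathbf{x}} = [\mathbf{x}_1^T\; \mathbf{x}_2^T]^T$ and $\tilde{\mathbf{H}} = [\,\mathbf{H}_1\mathbf{P}_1 \;\; \mathbf{H}_2\mathbf{P}_2\,]$, so that the MAC model~(\ref{1}) is rewritten as the equivalent single-user linear vector Gaussian channel $\mathbf{y} = \sqrt{snr}\,\tilde{\mathbf{H}}\tilde{\mathbf{x}} + \mathbf{n}$. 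Since $\mathbf{x}_1$ and $\mathbf{x}_2$ are independent, defining $I(snr) := I(\tilde{\mathbf{x}};\mathbf{y})$ preserves the MAC interpretation, and the vector I-MMSE identity of \cite{34} and \cite{35} immediately yields
\[
\frac{dI(snr)}{dsnr} = \mathrm{Tr}\bigl\{\tilde{\mathbf{H}}\,\tilde{\mathbf{E}}\,\tilde{\mathbf{H}}^{\dagger}\bigr\},
\]
where $\tilde{\mathbf{E}} = \mathbb{E}\bigl[(\tilde{\mathbf{x}} - \hat{\tilde{\mathbf{x}}})(\tilde{\mathbf{x}} - \hat{\tilde{\mathbf{x}}})^{\dagger}\bigr]$ is the conditional error covariance of $\tilde{\mathbf{x}}$ given $\mathbf{y}$.

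Next I would expand $\tilde{\mathbf{E}}$ in block form, with diagonal blocks $\mathbf{E}_{ii} = \mathbb{E}[(\mathbf{x}_i - \hat{\mathbf{x}}_i)(\mathbf{x}_i - \hat{\mathbf{x}}_i)^{\dagger}]$ and off-diagonal block $\mathbf{E}_{12} = \mathbb{E}[(\mathbf{x}_1 - \hat{\mathbf{x}}_1)(\mathbf{x}_2 - \hat{\mathbf{x}}_2)^{\dagger}] = \mathbf{E}_{21}^{\dagger}$. Distributing the trace across the $2\times 2$ block product and collapsing the two off-diagonal contributions via the cyclic property of the trace yields
\[
\mathrm{Tr}\bigl\{\tilde{\mathbf{H}}\,\tilde{\mathbf{E}}\,\tilde{\mathbf{H}}^{\dagger}\bigr\} = \sum_{i=1}^{2}\mathrm{Tr}\bigl\{\mathbf{H}_i\mathbf{P}_i\mathbf{E}_{ii}\mathbf{P}_i^{\dagger}\mathbf{H}_i^{\dagger}\bigr\} + 2\,\mathrm{Re}\,\mathrm{Tr}\bigl\{\mathbf{H}_1\mathbf{P}_1\mathbf{E}_{12}\mathbf{P}_2^{\dagger}\mathbf{H}_2^{\dagger}\bigr\}.
\]
The first sum is what one would naturally call $mmse(snr)$ in the multiuser setup---the sum of the per-user MMSE contributions one obtains by applying the single-user I-MMSE formula to each user in isolation---while the second term, driven by the cross-covariance $\mathbf{E}_{12}$ of the two posterior estimation errors, is precisely the interference-induced covariance $\psi(snr)$ advertised in the abstract.

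Two consistency checks would close the argument. First, switching off the interfering user (for example $\mathbf{P}_2 = \mathbf{0}$, or any situation in which the two posteriors factorise) forces $\mathbf{E}_{12} \to \mathbf{0}$ and makes $\psi(snr)$ vanish, recovering the scalar-SNR I-MMSE of \cite{34}---this is the ``traverses back to the single-user identity'' promise in the preamble. Second, specialising $\tilde{\mathbf{x}}$ to jointly Gaussian inputs should reproduce the Palomar--Verd\'u linear vector Gaussian expression of \cite{35}, which simultaneously pins down the Hermitian structure and the factor of two in front of the cross term.

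The main obstacle I anticipate is the handling of $\mathbf{E}_{12}$: even though $\mathbf{x}_1$ and $\mathbf{x}_2$ are independent \emph{a priori}, observation of $\mathbf{y}$ couples their posteriors, so in general $\mathbf{E}_{12} \neq \mathbf{0}$ for arbitrary non-Gaussian inputs, and some care is needed---via the orthogonality principle---to exhibit this cross-covariance in a form that manifestly depends only on the conditional mean estimates and not on the joint stacking artifact. Once this identification is made, verifying that the combination $\mathbf{H}_1\mathbf{P}_1\mathbf{E}_{12}\mathbf{P}_2^{\dagger}\mathbf{H}_2^{\dagger}$ is exactly the ``cross correlation of the multiuser input estimates, the channels and the precoding matrices'' described in the abstract is the substantive content of the theorem; the differentiation step itself reduces to a direct invocation of the vector I-MMSE identity.
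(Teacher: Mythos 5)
Your route is genuinely different from the paper's. The paper proves the theorem from first principles in its Appendix A: it writes $I(\mathbf{x}_1,\mathbf{x}_2;\mathbf{y})=-n_r\log(\pi e)-\int p_y(\mathbf{y})\log p_y(\mathbf{y})\,d\mathbf{y}$, differentiates under the integral, converts $\partial_{snr}\,p_{y|x_1,x_2}$ into a gradient with respect to $\mathbf{y}$, integrates by parts, and then inserts the conditional-mean representations of $\widehat{\mathbf{x}}_1,\widehat{\mathbf{x}}_2$ --- in effect redoing the Guo--Shamai--Verd\'u argument for two inputs. You instead stack the users, invoke the already-known vector I-MMSE identity for $\mathbf{y}=\sqrt{snr}\,\tilde{\mathbf{H}}\tilde{\mathbf{x}}+\mathbf{n}$, and read the decomposition off the $2\times 2$ block structure of $\tilde{\mathbf{E}}$. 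That is legitimate (the MAC is exactly such a channel and $I(\mathbf{x}_1,\mathbf{x}_2;\mathbf{y})=I(\tilde{\mathbf{x}};\mathbf{y})$) and considerably shorter, at the price of importing the vector identity rather than proving it. The step you defer as an ``obstacle'' is actually immediate and should be written out: by prior independence, zero means and the tower property, $\mathbf{E}_{12}=\mathbb{E}[\mathbf{x}_1\mathbf{x}_2^{\dagger}]-\mathbb{E}[\widehat{\mathbf{x}}_1\widehat{\mathbf{x}}_2^{\dagger}]=-\mathbb{E}[\widehat{\mathbf{x}}_1\widehat{\mathbf{x}}_2^{\dagger}]$.

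The substantive point to flag is that your cross term does not coincide with the paper's stated $\psi(snr)$. Your block expansion gives $\psi(snr)=2\,\mathrm{Re}\,\mathrm{Tr}\{\mathbf{H}_1\mathbf{P}_1\mathbf{E}_{12}(\mathbf{H}_2\mathbf{P}_2)^{\dagger}\}=-\mathrm{Tr}\{\mathbf{H}_1\mathbf{P}_1\mathbb{E}[\widehat{\mathbf{x}}_1\widehat{\mathbf{x}}_2^{\dagger}](\mathbf{H}_2\mathbf{P}_2)^{\dagger}\}-\mathrm{Tr}\{\mathbf{H}_2\mathbf{P}_2\mathbb{E}[\widehat{\mathbf{x}}_2\widehat{\mathbf{x}}_1^{\dagger}](\mathbf{H}_1\mathbf{P}_1)^{\dagger}\}$, a symmetric, real-valued quantity, whereas the paper defines $\psi(snr)$ as the \emph{difference} of those two traces (plus on the first, minus on the second), a convention its Appendix A arrives at because the derivative of the exponent is taken with $(\mathbf{H}_1\mathbf{P}_1\mathbf{x}_1)^{\dagger}-(\mathbf{H}_2\mathbf{P}_2\mathbf{x}_2)^{\dagger}$ rather than the sum. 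As written, your proposal therefore proves an identity of the claimed shape with your own $\psi$, not the paper's; note also that a trace minus its Hermitian-conjugate trace is purely imaginary, which is precisely why your symmetric form is the one compatible with a real derivative. You need to either reconcile with the paper's sign convention or explicitly state that your derivation yields the symmetric cross term, before the argument can stand in for the paper's proof.
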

Where,
\vspace{-0.15cm}
\small
\begin{multline}
\label{mmsee}
mmse(snr)=Tr\left\{\bf{H_1P_1}\bf{E_1}\bf{(H_1P_1)^{\dag}}\right\}+\\
Tr\left\{\bf{H_2P_2}\bf{E_2}\bf{(H_2P_2)^{\dag}} \right\},
\end{multline}
\normalsize
\scriptsize
\begin{multline}
\label{psii}
\psi(snr)=-Tr\left\{\bf{H_1P_1}\mathbb{E}_y[\bf{\mathbb{E}_{x_1|y}[\bf{x_1|y}]\mathbb{E}_{x_2|y}[\bf{x_2|y}]^{\dag}}]\bf{(H_2P_2)^{\dag}}\right\} \nonumber \\
-Tr\left\{\bf{H_2P_2}\mathbb{E}_y[\bf{\mathbb{E}_{x_2|y}[\bf{x_2|y}]\mathbb{E}_{x_1|y}[\bf{x_1|y}]^{\dag}}]\bf{(H_1P_1)^{\dag}}\right\},
\end{multline}
\normalsize
\begin{proof}
See Appendix A.
\end{proof}
The per-user covariance matrix of the estimation error, also called the per-user MMSE matrix is given respectively as follows:
\begin{equation}
\label{7}
\bf{E_1}=\bf{\mathbb{E}_y[(x_1-\widehat{x}_1)(x_1-\widehat{x}_1)^{\dag}]}
\end{equation}
\begin{equation}
\label{8}
\bf{E_2}=\bf{\mathbb{E}_y[(x_2-\widehat{x}_2)(x_2-\widehat{x}_2)^{\dag}]}.
\end{equation}
The input estimates of each user input is given respectively as follows:
\small
\begin{multline}
\label{eq.37}
\widehat{{\bf{x}}}_1={\bf{\mathbb{E}_{x_{1}|y}}[{\bf{x_{1}|y}}]}\\
=\sum_{{\bf{x_{1},x_{2}}}} \frac{{\bf{x_{1}}}p_{y|x_{1},x_{2}}({\bf{y|x_{1},x_{2}}})p_{x_1}({\bf{x_{1}}})p_{x_2}({\bf{x_{2}}})}{p_{y}({\bf{y}})}
\end{multline}
\begin{multline}
\label{eq.38}
\widehat{{\bf{x}}}_2={\bf{\mathbb{E}_{x_{2}|y}}[{\bf{x_{2}|y}}]}\\
=\sum_{{\bf{x_{1},x_{2}}}} \frac{{\bf{x_{2}}}p_{y|x_{1},x_{2}}({\bf{y|x_{1},x_{2}}})p_{x_1}({\bf{x_{1}}})p_{x_2}({\bf{x_{2}}})}{p_{y}({\bf{y}})}.
\end{multline}
\normalsize
The conditional probability distribution of the Gaussian noise is defined as:
\small
\begin{equation}
p_{y|x_1,x_2}({\bf{y|x_1, x_2}})=\frac{1}{\pi^{n_r}}e^{-\left\|{\bf{y}}-\sqrt{snr}{\bf{H_1P_1x_1}}-\sqrt{snr}{\bf{H_2P_2x_2}}\right\|^{2}}
\end{equation}
\normalsize
The probability density function for the received vector $\bf{y}$ is defined as:
\small
\begin{equation}
{p_{y}({\bf{y}})}=\sum_{{\bf{x_{1},x_{2}}}} {p_{y|x_{1},x_{2}}}({\bf{y|x_{1},x_{2}}})p_{x_1}({\bf{x_{1}}})p_{x_2}({\bf{x_{2}}}).
\end{equation}
\normalsize
Henceforth, for the case of two-user MAC, the system MMSE with respect to the SNR $mmse(snr)$ is the MMSE corresponding to the best estimation of  inputs $\bf{x_1}$ and $\bf{x_2}$ upon the observation for a given signal-to-noise ratio (SNR), i.e.,
\small
\begin{multline}
\label{14}
mmse(snr)=\bf{\mathbb{E}_y}\left[\left\|\bf{H_1P_1}(\bf{x_1}-\bf{\mathbb{E}_{x_1|y}[x_1|y]})\right\|^{2}\right]\\
+\bf{\mathbb{E}_y}\left[\left\|\bf{H_2P_2}(\bf{x_2-\mathbb{E}_{x_2|y}[x_2|y]})\right\|^{2}\right],
\end{multline}
\normalsize
\small
\begin{equation}
~~=Tr\left\{\bf{H_1P_1}\bf{E_1}{(\bf{H_1P_1})}^{\dag}\right\} + Tr\left\{{\bf{H_2P_2}\bf{E_2}{(\bf{H_2P_2})}^{\dag}}\right\}  
\end{equation}
\normalsize
as given in Theorem~\ref{theorem1.0}.

Note that $I(snr)$ is the joint mutual information in $I(\bf{x_1,x_2,y})$, the term $mmse(snr)$ is due to the users MMSEs, particularly, $mmse(snr)=mmse_1(snr)+mmse_2(snr)$ and $\psi(snr)$ are covariance terms that appear due to the covariance of the interferers. Those terms are with respect to the channels, precoders, and non-linear estimates of the user inputs. When the covariance terms vanish to zero, the mutual information will be equal to the mmse, with respect to the SNR. This applies to the single user and point to point communications. Therefore, the result of Theorem~\ref{theorem1.0} is a generalization of previous result and boils down to the result of Guo et al, \cite{34} under certain conditions which are: (i) when the cross correlation between the inputs estimates equals zero (ii) when interference can be neglected or easily removed (i.e. interference is very weak, very strong, or aligned) (iii) under the signle user setup. (iv) when certain access and power allocation scheme is used for inputs Gaussian distributed and a successive clean interference estimation process is performed, \cite{GhanemMedardISIT17}.

Such generalized fundamental relation between the change in the mutliuser mutual information and the SNR is of particular relevance. Firstly, such result allows us to understand the behavior of per-user rates with respect to the interference due to the mutual interference and the interference due to other users behaviour in terms of power levels and channel strengths. In addition, the result allows us to be able to quantify the losses incured due to the interference in terms of bits. Therefore, when the term $\psi(snr)$ equals zero. The derivative of the mutual information with respect to the SNR equals the total $mmse(snr)$:
\begin{equation}
\frac{dI(snr)}{d snr}=mmse(snr),
\end{equation}
which matches the result by Guo et. al in~\cite{34}.

\subsection{The Conditional and Non-Conditional I-MMSE and a Remark on Interference Channels}
The implication of the derived relation on the interference channel is of particular relevance. To particular, it is worth to note that we can capitalize on the new fundamental relation to extend the derivative with respect to the SNR to the conditional and non-conditional mutual information components that provides per user rates. To make this more clear, we capitalize on the chain rule of the mutual information which states the following:
\begin{equation}
\label{chainR}
I({\bf{x_1, x_2; y}})=I({\bf{x_1; y}})+I({\bf{x_2; y|x_1}}) 
\end{equation}
Therefore, through this observation we can conclude the following theorem.
\begin{theorem}
\label{theoremConditional}
The relation between the derivative of the conditional and the non-conditional mutual information and their corresponding minimum mean squared error satisfies, respectively:
\begin{equation}
\frac{dI(\bf{x_2; y}|x_1)}{dsnr}=mmse_2(snr)+\psi(snr)
\end{equation}
\begin{equation}
\frac{dI({\bf{x_1; y}})}{dsnr}=mmse_1(\gamma snr)
\end{equation}
\end{theorem}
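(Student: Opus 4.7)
The plan is to derive both identities as direct consequences of the mutual information chain rule (\ref{chainR}) combined with Theorem~\ref{theorem1.0} and the classical single-user I-MMSE identity of Guo, Shamai, and Verd\'u. First I would differentiate both sides of (\ref{chainR}) with respect to $snr$, yielding
\[
\frac{dI(\mathbf{x_1, x_2; y})}{dsnr} = \frac{dI(\mathbf{x_1; y})}{dsnr} + \frac{dI(\mathbf{x_2; y|x_1})}{dsnr},
\]
and then apply Theorem~\ref{theorem1.0} to the left-hand side to rewrite it as $mmse_1(snr)+mmse_2(snr)+\psi(snr)$. This reduces the task to computing either of the two summands on the right; the remaining one is then obtained by subtraction.

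To handle $I(\mathbf{x_2; y|x_1})$ I would condition on a specific realization of $\mathbf{x_1}$ and subtract the now-known contribution $\sqrt{snr}\,\mathbf{H_1 P_1 x_1}$ from $\mathbf{y}$. What remains is a linear vector AWGN channel driven by $\mathbf{x_2}$ alone, to which the single-user I-MMSE identity applies directly. Taking the outer expectation over $\mathbf{x_1}$ yields an MMSE term whose posterior is the joint one conditioned on both $\mathbf{y}$ and $\mathbf{x_1}$. The key algebraic step is then to relate this conditional MMSE, which involves $\mathbb{E}[\mathbf{x_2}\,|\,\mathbf{y},\mathbf{x_1}]$, back to the joint-posterior quantity $\mathbf{E_2}$ of (\ref{8}) and to the cross-estimate expectation entering (\ref{psii}); I expect the orthogonality principle and the tower property of conditional expectation to collapse the discrepancy precisely onto $\psi(snr)$, producing the first identity. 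The second identity, $\frac{dI(\mathbf{x_1; y})}{dsnr}=mmse_1(\gamma\,snr)$, then follows by cancellation against the left-hand side of the chain-rule equation, with the scalar $\gamma$ absorbing the effective SNR rescaling incurred when the interference $\sqrt{snr}\,\mathbf{H_2 P_2 x_2}$ is folded into the effective noise seen by user~1.

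The main obstacle is the algebraic identification in the second step: showing that the difference between the conditional MMSE of $\mathbf{x_2}$ given $(\mathbf{y},\mathbf{x_1})$ and the marginal MMSE of $\mathbf{x_2}$ given only $\mathbf{y}$ is \emph{precisely} the cross-covariance term $\psi(snr)$ rather than some other covariance quantity. I anticipate that re-expanding the trace expressions in (\ref{psii}) and invoking the law of total variance will surface the needed identity. The $\gamma$ rescaling in the second identity is equally delicate: it should emerge from a Gaussian change-of-variable argument that replaces interference-plus-noise by an equivalent AWGN channel at a reduced effective SNR, but verifying consistency with the multiuser form of Theorem~\ref{theorem1.0} will require care, since the residual interference $\sqrt{snr}\,\mathbf{H_2 P_2 x_2}$ is not itself Gaussian in general.
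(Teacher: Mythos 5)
Your skeleton (differentiate the chain rule \eqref{chainR}, use Theorem~\ref{theorem1.0} on the joint term, handle one summand directly, get the other by subtraction) is the same as the paper's, but you invert the direction, and the inversion is where your argument breaks. The paper's proof computes nothing on the conditional side: it takes $\frac{dI(\mathbf{x_1;y})}{dsnr}=mmse_1(\gamma\,snr)$ as the known ingredient (single-user I-MMSE applied to the channel in which $\sqrt{snr}\,\mathbf{H_2P_2x_2}$ is lumped into the noise, $\gamma$ being the resulting SNR scaling, with the justification deferred to the cited references), and then obtains the first display of the theorem purely by subtraction. You instead propose to compute $\frac{dI(\mathbf{x_2;y|x_1})}{dsnr}$ directly. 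That genie-aided computation gives $Tr\{\mathbf{H_2P_2}\tilde{\mathbf{E}}_2(\mathbf{H_2P_2})^{\dag}\}$ with $\tilde{\mathbf{E}}_2$ built from $\mathbb{E}[\mathbf{x_2}|\mathbf{y},\mathbf{x_1}]$, which is \emph{not} the $\mathbf{E_2}$ of \eqref{8}. The law of total variance you invoke only yields $\mathbf{E_2}=\tilde{\mathbf{E}}_2+\mathbb{E}\left[(\mathbb{E}[\mathbf{x_2}|\mathbf{y},\mathbf{x_1}]-\mathbb{E}[\mathbf{x_2}|\mathbf{y}])(\mathbb{E}[\mathbf{x_2}|\mathbf{y},\mathbf{x_1}]-\mathbb{E}[\mathbf{x_2}|\mathbf{y}])^{\dag}\right]$, i.e.\ a discrepancy sandwiched between $\mathbf{H_2P_2}$ and $(\mathbf{H_2P_2})^{\dag}$ and quadratic in user~2's estimator difference alone; structurally this is not $\psi(snr)$, which is a difference of \emph{cross} terms $\mathbf{H_1P_1}\mathbb{E}[\widehat{\mathbf{x}}_1\widehat{\mathbf{x}}_2^{\dag}](\mathbf{H_2P_2})^{\dag}-\mathbf{H_2P_2}\mathbb{E}[\widehat{\mathbf{x}}_2\widehat{\mathbf{x}}_1^{\dag}](\mathbf{H_1P_1})^{\dag}$. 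No orthogonality or tower-property argument you name converts one into the other, so the "key algebraic step" you flag as anticipated is exactly the content of the first identity and remains unproved.

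The second identity then cannot follow "by cancellation." Subtracting your (unproved) first identity from Theorem~\ref{theorem1.0} would give $\frac{dI(\mathbf{x_1;y})}{dsnr}=mmse_1(snr)$ with $mmse_1$ the joint-posterior per-user MMSE, not $mmse_1(\gamma\,snr)$; equating the two amounts to proving that the marginal channel seen by user~1, whose interference-plus-noise is non-Gaussian and $snr$-dependent (as you yourself note), satisfies a single-user I-MMSE law at a rescaled SNR. That is precisely the assertion the paper starts from rather than derives, so in your ordering neither display receives a complete proof: the substantive work is pushed into two "anticipated" identities, one of which is structurally mismatched with $\psi(snr)$ as defined in the paper.
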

\begin{proof}
Taking the derivative of both sides of \eqref{chainR}, and subtracting the derivative of $I(\bf{x_1; y})$ which is equal to $mmse_1(\gamma snr)$, $\gamma$ is a scaling factor, due to the fact that $x_1$ is decoded first considering the other users' input $x_2$ as noise. Therefore, Theorem~\ref{theoremConditional} has been proved.
\end{proof}
Note that the derivative of the conditional mutual information as well as the non-conditional mutual information can be scaled due to different SNRs in a two-user interference channel. However, the scaling is straightforward to apply. For further details, refer to \cite{samahthesis}, \cite{SamahMCPjournal}.

The following theorems in addition to Theorem \ref{theorem1.0} generalizes the connections between information theory and estimation theory to the multiuser case.
\begin{theorem}
\label{theorem1.4}
The relation between the gradient of the mutual information with respect to the channel and the non-linear MMSE for a two user-MAC channel with arbitrary inputs~\eqref{1} satisfies:
\small
\begin{equation}
\label{eq.5}
{{\bf{\nabla_{H_1}}} I({\bf{x_{1},x_{2};y}})=\bf{H_1P_1E_1P_1}^{\dag}-\bf{H_2 P_2}\bf{\mathbb{E}[\widehat{x}_2\widehat{x}_1^{\dag}]}\bf{P_1}^{\dag}}
\end{equation}
\begin{equation}
\label{eq.6}
{{\bf{\nabla_{H_2}}} I({\bf{x_{1},x_{2};y}})=\bf{H_2P_2E_2P_2^{\dag}}-\bf{H_1 P_1} \bf{\mathbb{E}[\widehat{x}_1\widehat{x}_2^{\dag}]}\bf{P_2}^{\dag}}
\end{equation}
\normalsize
\end{theorem}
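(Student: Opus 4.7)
The natural strategy is to recast model~\eqref{1} as a single linear vector Gaussian channel with a stacked input and a block-partitioned channel, so that the Palomar--Verd\'u gradient identity from \cite{35} applies directly. Concretely, I would set $\bar{\bf{x}}=[{\bf{x_1}}^T,{\bf{x_2}}^T]^T$ and $\bar{\bf{H}}=\sqrt{snr}\,[{\bf{H_1P_1}},\;{\bf{H_2P_2}}]$, so that ${\bf{y}}=\bar{\bf{H}}\bar{\bf{x}}+{\bf{n}}$ and $I({\bf{x_1,x_2;y}})=I(\bar{\bf{x}};{\bf{y}})$. The standard vector-channel gradient identity then gives $\nabla_{\bar{\bf{H}}} I = \bar{\bf{H}}\bar{\bf{E}}$, where $\bar{\bf{E}}$ is the joint MMSE matrix of $\bar{\bf{x}}$ given ${\bf{y}}$.

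Next I would block-decompose $\bar{\bf{E}}$ into a $2\times 2$ block matrix with diagonal blocks ${\bf{E_1}},{\bf{E_2}}$ and off-diagonal block ${\bf{E_{12}}}=\mathbb{E}[({\bf{x_1}}-\widehat{\bf{x}}_1)({\bf{x_2}}-\widehat{\bf{x}}_2)^{\dag}]$. Using the prior independence of ${\bf{x_1}}$ and ${\bf{x_2}}$ (so that $\mathbb{E}[{\bf{x_1 x_2}}^{\dag}]=0$) together with the tower property (which yields $\mathbb{E}[{\bf{x_i}}\widehat{\bf{x}}_j^{\dag}]=\mathbb{E}[\widehat{\bf{x}}_i\widehat{\bf{x}}_j^{\dag}]$), the cross term collapses to ${\bf{E_{12}}}=-\mathbb{E}[\widehat{\bf{x}}_1\widehat{\bf{x}}_2^{\dag}]$, and symmetrically ${\bf{E_{12}}}^{\dag}=-\mathbb{E}[\widehat{\bf{x}}_2\widehat{\bf{x}}_1^{\dag}]$. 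Reading off the first column-block of the product $\bar{\bf{H}}\bar{\bf{E}}$ then produces $\nabla_{\sqrt{snr}{\bf{H_1 P_1}}} I$, and the second column-block produces $\nabla_{\sqrt{snr}{\bf{H_2 P_2}}} I$.

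Finally I would apply the matrix chain rule for composition with the linear maps $\sqrt{snr}\,(\cdot)\,{\bf{P_1}}$ and $\sqrt{snr}\,(\cdot)\,{\bf{P_2}}$ to propagate these expressions back to $\nabla_{{\bf{H_1}}} I$ and $\nabla_{{\bf{H_2}}} I$, so that the trailing precoder enters as ${\bf{P_1}}^{\dag}$ (resp.\ ${\bf{P_2}}^{\dag}$) on the right. Absorbing the prefactors and substituting the cross-term simplification yields exactly~\eqref{eq.5} and, by swapping the indices $1\leftrightarrow 2$, also~\eqref{eq.6}.

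The main technical obstacle is bookkeeping the complex matrix-calculus conventions consistently, so that the conjugate transpose on the precoders lands on the correct side and the Palomar--Verd\'u identity is invoked with the right conjugation convention on the stacked channel. The cross-MMSE simplification, though short, carries the genuinely multiuser content and should be spelled out rather than invoked tacitly, since it is precisely this off-diagonal block of $\bar{\bf{E}}$ that produces the interference correction $-{\bf{H_2 P_2}}\mathbb{E}[\widehat{\bf{x}}_2\widehat{\bf{x}}_1^{\dag}]{\bf{P_1}}^{\dag}$ appearing in~\eqref{eq.5}.
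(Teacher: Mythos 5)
Your proposal is correct, but it takes a genuinely different route from the paper's. The paper does not prove Theorem~\ref{theorem1.4} in-house: it defers to Appendix B of \cite{newSamah}, where the identity is obtained by direct differentiation of $I({\bf{x_1,x_2;y}})$ with respect to ${\bf{H_1}}$ -- differentiating the conditional and output densities and integrating by parts in ${\bf{y}}$, in the same spirit as Appendix A here -- so the interference term $-{\bf{H_2P_2}}\mathbb{E}[\widehat{\bf{x}}_2\widehat{\bf{x}}_1^{\dag}]{\bf{P_1}}^{\dag}$ emerges from the conditional-mean representation of $\nabla_{\bf{y}}p_y({\bf{y}})$ rather than from a block decomposition. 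Your reduction, stacking $\bar{\bf{x}}=[{\bf{x_1}}^T,{\bf{x_2}}^T]^T$ with $\bar{\bf{H}}=\sqrt{snr}\,[{\bf{H_1P_1}},{\bf{H_2P_2}}]$ and invoking the Palomar--Verd\'u identity $\nabla_{\bar{\bf{H}}}I=\bar{\bf{H}}\bar{\bf{E}}$ of \cite{35}, is a legitimate and shorter alternative, since that identity holds for arbitrary input laws and hence for the product distribution of $({\bf{x_1}},{\bf{x_2}})$; and your evaluation of the off-diagonal block, $\mathbb{E}[({\bf{x_1}}-\widehat{\bf{x}}_1)({\bf{x_2}}-\widehat{\bf{x}}_2)^{\dag}]=\mathbb{E}[{\bf{x_1x_2}}^{\dag}]-\mathbb{E}[{\bf{x_1}}\widehat{\bf{x}}_2^{\dag}]-\mathbb{E}[\widehat{\bf{x}}_1{\bf{x_2}}^{\dag}]+\mathbb{E}[\widehat{\bf{x}}_1\widehat{\bf{x}}_2^{\dag}]=-\mathbb{E}[\widehat{\bf{x}}_1\widehat{\bf{x}}_2^{\dag}]$ by independence and the tower property, is exactly the multiuser content and is correct. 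The direct route buys independence from complex matrix-calculus conventions and shows explicitly where the cross-correlation originates; your route buys economy and a transparent reduction to a known single-channel result. One point to tighten: carried out literally, the chain rule through ${\bf{B_1}}=\sqrt{snr}\,{\bf{H_1P_1}}$ produces an overall factor $snr$ multiplying the right-hand sides of \eqref{eq.5}--\eqref{eq.6}; the theorem as stated omits it, so your phrase ``absorbing the prefactors'' should be made an explicit normalization (or the factor retained) rather than left implicit.
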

\begin{proof}
The detailed proof has been provided in Appendix B, \cite{newSamah}
\end{proof}
\begin{theorem}
\label{theorem2.4}
The relation between the gradient of the mutual information with respect to the precoding matrix and the non-linear MMSE for a two user-MAC channel with arbitrary inputs~\eqref{1} satisfies:
\small
\begin{equation}
\label{eq.5.1}
{\bf{\nabla_{P_1}}}I({\bf{{x_{1},x_{2};y}}})=\bf{H_1^{\dag}}{\bf{H}_{1}}{\bf{P}_1}{\bf{E}_{1}}-\bf{H_1^{\dag}}{\bf{H}_{2}}{\bf{P}_2}\bf{\mathbb{E}[\widehat{x}_2\widehat{x}_1^{\dag}]}
\end{equation}
\begin{equation}
\label{eq.6.1}
{\bf{\nabla_{P_2}}}I({\bf{x_{1},x_{2};y}})=\bf{H_2^{\dag}}{\bf{H}_{2}}{\bf{P}_2}{\bf{E}_{2}}-\bf{H_2^{\dag}}{\bf{H}_{1}}{\bf{P}_1}\bf{\mathbb{E}[\widehat{x}_1\widehat{x}_2^{\dag}]}
\end{equation}
\normalsize
\end{theorem}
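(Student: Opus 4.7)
The plan is to derive $\nabla_{\mathbf{P}_1} I$ by reducing to Theorem~\ref{theorem1.4} via the matrix chain rule, exploiting that the model \eqref{1} depends on $\mathbf{H}_1$ and $\mathbf{P}_1$ only through the effective channel $\widetilde{\mathbf{H}}_1 := \mathbf{H}_1\mathbf{P}_1$ (and similarly $\widetilde{\mathbf{H}}_2 := \mathbf{H}_2\mathbf{P}_2$). As a consequence, the likelihood $p_{y|x_1,x_2}$, the marginal $p_y$, the posterior means $\widehat{\mathbf{x}}_i$, the per-user error matrices $\mathbf{E}_i$, and the cross-correlation $\mathbb{E}[\widehat{\mathbf{x}}_i \widehat{\mathbf{x}}_j^\dagger]$ are functionals of $(\widetilde{\mathbf{H}}_1,\widetilde{\mathbf{H}}_2)$ alone, so one is free to treat $\widetilde{\mathbf{H}}_1$ as the primary variable.

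I would first apply Theorem~\ref{theorem1.4} to the reformulated system in which the precoders have been absorbed into the channels (i.e., $\mathbf{H}_i$ replaced by $\widetilde{\mathbf{H}}_i$ and $\mathbf{P}_i$ replaced by $\mathbf{I}$). This yields, with no algebra, the intermediate identity $\nabla_{\widetilde{\mathbf{H}}_1} I = \widetilde{\mathbf{H}}_1 \mathbf{E}_1 - \widetilde{\mathbf{H}}_2 \mathbb{E}[\widehat{\mathbf{x}}_2 \widehat{\mathbf{x}}_1^\dagger]$. I would then transfer this back to a gradient in $\mathbf{P}_1$: a perturbation $\mathbf{P}_1 \mapsto \mathbf{P}_1 + \epsilon\boldsymbol{\Delta}$ induces $\widetilde{\mathbf{H}}_1 \mapsto \widetilde{\mathbf{H}}_1 + \epsilon \mathbf{H}_1\boldsymbol{\Delta}$, and taking Frobenius inner products with the gradient gives
\[
\langle \nabla_{\mathbf{P}_1} I, \boldsymbol{\Delta}\rangle = \langle \nabla_{\widetilde{\mathbf{H}}_1} I, \mathbf{H}_1\boldsymbol{\Delta}\rangle = \langle \mathbf{H}_1^\dagger \nabla_{\widetilde{\mathbf{H}}_1} I, \boldsymbol{\Delta}\rangle,
\]
so $\nabla_{\mathbf{P}_1} I = \mathbf{H}_1^\dagger\nabla_{\widetilde{\mathbf{H}}_1} I$. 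Substituting back $\widetilde{\mathbf{H}}_i = \mathbf{H}_i\mathbf{P}_i$ reproduces exactly \eqref{eq.5.1}, and \eqref{eq.6.1} follows by symmetry under the relabeling $1 \leftrightarrow 2$.

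The main subtlety, and where I would be most careful, is the Wirtinger/complex bookkeeping: the $\mathbf{H}_1^\dagger$ (rather than $\mathbf{H}_1^T$) on the left reflects the standard convention for gradients of real-valued functions of complex matrices, and one must verify that the same convention underpins the proof of Theorem~\ref{theorem1.4} in Appendix~B of \cite{newSamah}; otherwise a stray conjugate could creep in. Aside from this, the argument is a one-line chain-rule reduction on top of the previously established Theorem~\ref{theorem1.4} and involves no new estimation-theoretic content. A self-contained alternative would be to differentiate $h(\mathbf{y})$ directly in $\mathbf{P}_1$ and collect terms using the posterior-mean representations~\eqref{eq.37}--\eqref{eq.38}, but this merely re-executes the calculation already done for Theorem~\ref{theorem1.4}.
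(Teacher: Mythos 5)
Your argument is correct, and it is genuinely a different route from the paper's: the paper does not prove Theorem~\ref{theorem2.4} in-line but delegates it to an appendix of \cite{newSamah}, where (as with Theorems~\ref{theorem1.0} and \ref{theorem1.4}) the gradient is obtained by direct differentiation of the entropy/mutual information with the posterior-mean representations \eqref{eq.37}--\eqref{eq.38}, i.e.\ essentially re-running the Appendix~A style computation with $\mathbf{P}_1$ in place of $snr$ or $\mathbf{H}_1$. You instead observe that the likelihood, hence $p_y$, $\widehat{\mathbf{x}}_i$, $\mathbf{E}_i$ and $\mathbb{E}[\widehat{\mathbf{x}}_i\widehat{\mathbf{x}}_j^\dagger]$, depend on $(\mathbf{H}_i,\mathbf{P}_i)$ only through $\widetilde{\mathbf{H}}_i=\mathbf{H}_i\mathbf{P}_i$, apply Theorem~\ref{theorem1.4} in the reparametrized model with identity precoders to get $\nabla_{\widetilde{\mathbf{H}}_1}I=\widetilde{\mathbf{H}}_1\mathbf{E}_1-\widetilde{\mathbf{H}}_2\mathbb{E}[\widehat{\mathbf{x}}_2\widehat{\mathbf{x}}_1^\dagger]$, and transfer it by the chain rule $\langle\nabla_{\mathbf{P}_1}I,\boldsymbol{\Delta}\rangle=\langle\nabla_{\widetilde{\mathbf{H}}_1}I,\mathbf{H}_1\boldsymbol{\Delta}\rangle$, giving $\nabla_{\mathbf{P}_1}I=\mathbf{H}_1^\dagger\nabla_{\widetilde{\mathbf{H}}_1}I$, which reproduces \eqref{eq.5.1}, and \eqref{eq.6.1} by symmetry. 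This is sound provided, as you note, the same real inner product $\mathrm{Re}\,Tr\{\mathbf{A}^\dagger\mathbf{B}\}$ (Wirtinger convention) underlies the gradient in Theorem~\ref{theorem1.4}; the internal consistency check is the paper's own relation ${\bf{\nabla_{P_1}}}I\,{\bf{P_1^{\dag}}}={\bf{H_1^{\dag}}}{\bf{\nabla_{H_1}}}I$, which your result satisfies identically. What your approach buys is economy and transparency: it isolates the estimation-theoretic content entirely in Theorem~\ref{theorem1.4} and makes the precoder gradient a one-line corollary, whereas the paper's (cited) direct derivation is self-contained and does not rely on the validity of Theorem~\ref{theorem1.4} at the specific point $\mathbf{P}_i=\mathbf{I}$; the only real cost of your route is that it inherits any convention or scaling issue present in Theorem~\ref{theorem1.4}, so the conjugate bookkeeping you flag is indeed the one place where care is required.
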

\begin{proof}
The detailed proof has been provided in Appendix C, \cite{newSamah}
\end{proof}
Theorems~\ref{theorem1.4} and \ref{theorem2.4} provide intuitions about the change of the mutual information with respect to the changes in the channel or the precoding (power allocation). A straightforward connection between both changes when each gradient is scaled with respect to the changing arbitrary parameter, we can write this connection as follows:
\begin{equation}
 {\bf{\nabla_{P_1}}}I({\bf{{x_{1},x_{2};y}}}){\bf{P_1^{\dag}}}={\bf{H_1^{\dag}}}{\bf{\nabla_{H_1}}}I({\bf{x_{1},x_{2};y}})
\end{equation}
\begin{equation}
 {\bf{\nabla_{P_2}}}I({\bf{x_{1},x_{2};y}}){\bf{P_2^{\dag}}}={\bf{H_2^{\dag}}}{\bf{\nabla_{H_2}}}I({\bf{x_{1},x_{2};y}})
\end{equation}
Note that we can derive the gradient of the mutual information with respect to any arbitrary parameter following similar steps of the proof of the previous two theorems. Note also that the derived relations in \eqref{eq.5},~\eqref{eq.6},~\eqref{eq.5.1}, and \eqref{eq.6.1} reduce to the relation between the gradient of the mutual information and the non-linear MMSE derived for the linear vector Gaussian channels \cite{35} if the cross correlation between the input estimates is zero, which applies to linear estimation with perfect reconstruction of estimates and removal of one estimate from another, \cite{GhanemMedardISIT17}.
\subsection{Gradient of the Conditional and Non-Conditional Mutual Information}
Theorem~\ref{theorem2.4} shows how much rate is lost due to the other user. The gradient of the mutual information provides a set of terms that are associated to the mutual interference, which provides a positive change, however, the loss is attributed to the effect of the non-mutual interference, which appears in the second term as a negative change. Therefore, we can account for such quantified rate loss via optimal power allocation and optimal precoding. In order to be able to understand more deeply the achieved rates of each user in a MAC channel. Ee capitalize on the chain rule of the mutual information to derive the conditional mutual information as follows,
\begin{equation}
\label{a390}
I({\bf{x_{1},x_{2};y}})=I({\bf{x_{2};y}})+I({\bf{x_{1};y|x_{2}}})
\end{equation}
and,
\begin{equation}
\label{a391}
I({\bf{x_{1},x_{2};y}})=I({\bf{x_{1};y}})+I({\bf{x_{2};y|x_{1}}})
\end{equation}
Where the joint mutual information term is defined as follows:
\vspace{-0.15cm}
\small
\begin{equation}
I({\bf{x_1,x_2;y}})= \mathbb{E}\left[\frac{p_{y|x_1,x_2}({\bf{y|x_1,x_2}})}{\sum_{{\bf{x_1',x_2'}}}p_{{\bf{y|x_1',x_2'}}}({\bf{y|x_1',x_2'}})p_{{\bf{x_1'}}}({\bf{x_1'}})p_{{\bf{x_2'}}}({\bf{x_2'}})}\right]
\end{equation}
\normalsize
Where, $x_1'$ and $x_1'$ correspond to all possible permutations of $x_1$ and $x_2$ drawn from each inputs' constellation set. 
The non-conditional mutual information is defined as follows:
\begin{equation}
I({\bf{x_1;y}})= \mathbb{E}\left[\frac{p_{y|x_1}({\bf{y|x_1}})}{\sum_{{\bf{x_1'}}}p_{{\bf{y|x_1'}}}({\bf{y|x_1'}})p_{{\bf{x_1'}}}({\bf{x_1'}})}\right]
\end{equation}
Where the signal $\bf{x}_2$ is considered as noise.
\begin{equation}
I({\bf{x_2;y}})= \mathbb{E}\left[\frac{p_{y|x_2}({\bf{y|x_2}})}{\sum_{{\bf{x_2'}}}p_{{\bf{y|x_2'}}}({\bf{y|x_2'}})p_{{\bf{x_2'}}}({\bf{x_2'}})}\right]
\end{equation}
Where the signal $\bf{x}_1$ is considered as noise. The conditional mutual information is defined as follows:
\vspace{-0.15cm}
\small
\begin{equation}
I({\bf{x_1;y|x_2}})= \mathbb{E}\left[\frac{p_{y|x_1',x_2}({\bf{y|x_1',x_2}})}{\sum_{{\bf{x_1',x_2}}}p_{{\bf{y|x_1',x_2}}}({\bf{y|x_1',x_2}})p_{{\bf{x_1'}}}({\bf{x_1'}})p_{{\bf{x_2}}}({\bf{x_2}})}\right]
\end{equation}
\begin{equation}
I({\bf{x_2;y|x_1}})= \mathbb{E}\left[\frac{p_{y|x_1,x_2'}({\bf{y|x_1,x_2'}})}{\sum_{{\bf{x_1,x_2'}}}p_{{\bf{y|x_1,x_2'}}}({\bf{y|x_1,x_2'}})p_{{\bf{x_1}}}({\bf{x_1}})p_{{\bf{x_2'}}}({\bf{x_2'}})}\right]
\end{equation}
\normalsize
Clearly, we know that $I({\bf{x_{2};y}})$ is the mutual information when user 2 is decoded first considering user 1 signal as noise. Therefore, we can write it as follows,
\vspace{-0.1cm}
\begin{equation}
\label{a41}
I({\bf{x_{2};y}})=\mathbb{E}\left[log \frac{p_{y|x_{2}}({\bf{y|x_{2}}})}{p_{y}({\bf{y}})}\right]
\end{equation}
\scriptsize
\begin{multline}
\label{a42}
p_{y|x_{2}}({\bf{y|x_{2}}})=\frac{1}{\pi^{n_r}} \times \\
e^{-{({\bf{y}}-\sqrt{snr}\bf{{H}_{2}}{{P_2}}\bf{x_{2}})^{\dag}(\bf{{P}_{1}^{\dag}{H}_{1}^{\dag}{H}_{1}{P_1}+I})^{-1}({\bf{y}}-\sqrt{snr}\bf{{H}_{2}}{{P_2}}\bf{x_{2}})}}
\end{multline}
\normalsize
\begin{equation}
\label{a43}
p_{y}({\bf{y}})=\displaystyle\sum\limits_{x_{2}'} p_{y|x_{2}'}({\bf{y|x_{2}'}})p_{x_{2}'}({\bf{x_{2}'}})
\end{equation}
Based on such definition, we conclude the following theorem which provides a new fundamental relation between the gradient of the mutual information with respect to the precoder of the other user given that the other user will be secondly decoded.
\begin{theorem}
\label{gradientwithnoise}
The gradient of the mutual information with respect to the precoder, for a scaled user power when the other user input is considered as noise is as follows,
\scriptsize
\begin{equation}
{{\nabla_{{\bf{P_1}}}}}I({\bf{x_{2};y}})=\bf{{H}_{2} P_2 E_{2}}\bf{{P}_{2}^{\dag}{H}_{2}^{\dag}} \bf{{H}_{1}}^{\dag}\bf{H_{1}P_{1}}({\bf{{P}_{1}^{\dag}{H}_{1}^{\dag}{H}_{1}{P_1}+I}})^{-1}
\end{equation}
\begin{equation}
 {\nabla_{{\bf{P_2}}}}I({\bf{x_{1};y}})=\bf{{H}_{1} P_1 E_{1}} \bf{{P}_{1}^{\dag}{H}_{1}^{\dag}} \bf{{H}_{2}}^{\dag}\bf{H_{2}P_{2}}({\bf{{P}_{2}^{\dag}{H}_{2}^{\dag}{H}_{2}{P_2}+I}})^{-1}
\end{equation}
\normalsize
\end{theorem}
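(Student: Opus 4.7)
The plan is to read the definitions \eqref{a41}--\eqref{a43} as modelling $x_1$ as a Gaussian interferer, so that the effective channel seen by $x_2$ is the single-user linear vector Gaussian channel
$$y = \sqrt{snr}\, H_2 P_2 x_2 + \tilde n, \qquad \tilde n \sim \mathcal{CN}(0,\Sigma), \quad \Sigma = I + snr\, H_1 P_1 P_1^{\dag} H_1^{\dag}.$$
Under this interpretation the precoder $P_1$ influences $I(x_2; y)$ only through the noise covariance $\Sigma$, so the task reduces to differentiating a colored-noise single-user mutual information through $\Sigma(P_1)$.

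First I would whiten the channel by setting $\tilde y = \Sigma^{-1/2} y$, giving $\tilde y = \sqrt{snr}\, \tilde H(P_1)\, x_2 + n'$ with white noise $n' \sim \mathcal{CN}(0,I)$ and effective channel $\tilde H(P_1) = \Sigma(P_1)^{-1/2} H_2 P_2$. The Palomar--Verd\'u identity \cite{35} (equivalently, the single-user specialization of Theorem~\ref{theorem1.4} obtained by silencing user~1) then gives $\nabla_{\tilde H} I(x_2; \tilde y) = snr\, \tilde H\, E_2$, where $E_2$ is the MMSE matrix of $x_2$ in the whitened channel. What remains is to propagate this gradient back to $P_1$ through $\tilde H(P_1)$ by the matrix chain rule.

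Using $\partial \Sigma^{-1} = -\Sigma^{-1}(\partial \Sigma)\Sigma^{-1}$ together with $\partial \Sigma = snr\, H_1\,(\partial P_1)\,P_1^{\dag} H_1^{\dag} + snr\, H_1 P_1\,(\partial P_1)^{\dag} H_1^{\dag}$, and chaining the differential through $\tilde H(P_1)$, the surviving gradient has the structure $H_1^{\dag} \Sigma^{-1} H_2 P_2 E_2 P_2^{\dag} H_2^{\dag} \Sigma^{-1} H_1 P_1$. The form stated in the theorem then follows from the push-through identity
$$\Sigma^{-1} H_1 P_1 = H_1 P_1 \bigl(I + snr\, P_1^{\dag} H_1^{\dag} H_1 P_1\bigr)^{-1},$$
which converts the $n_r$-dimensional inverse $\Sigma^{-1}$ on the right into the $n_t$-dimensional inverse $(P_1^{\dag}H_1^{\dag}H_1 P_1 + I)^{-1}$ appearing in the statement. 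The second identity, for $\nabla_{P_2} I(x_1; y)$, follows by symmetry upon interchanging the user indices.

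The hardest part will be the careful bookkeeping in the chain rule through $\Sigma^{-1/2}$: in particular, handling both the $\partial P_1$ and $(\partial P_1)^{\dag}$ contributions to $\partial \Sigma$ and collapsing them into the one-sided gradient written in the theorem, while maintaining the correct adjoint orientation throughout. A related subtlety is the implicit dimensional reconciliation between the $n_r \times n_r$ covariance $\Sigma$ and the $n_t \times n_t$ matrix $P_1^{\dag} H_1^{\dag} H_1 P_1 + I$; the push-through identity above is precisely what makes the two equivalent in the gradient, and I would view that manipulation as the technical heart of the proof.
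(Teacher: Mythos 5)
Your overall strategy --- reading \eqref{a41}--\eqref{a43} as a Gaussian-interferer model, whitening by $\Sigma^{-1/2}$, invoking the Palomar--Verd\'u channel-matrix gradient, and chaining back through $\Sigma({\bf{P_1}})$ --- is a sensible reading of how the paper sets up $I({\bf{x_2;y}})$, but it is not the paper's route: the paper's proof is a one-line pointer saying the result follows by the same steps as Theorem~\ref{theorem2.4}, i.e.\ direct differentiation of $\mathbb{E}\left[\log\left(p_{y|x_2}({\bf{y|x_2}})/p_y({\bf{y}})\right)\right]$ built from \eqref{a42}--\eqref{a43} with respect to ${\bf{P_1}}$ (with the details of that template deferred to an external appendix in \cite{newSamah}), rather than a whitening argument plus a chain rule through the noise covariance.

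More importantly, your final assembly step does not actually reach the stated formula, so there is a genuine gap. Carrying out your own chain rule gives, up to constants,
\begin{equation}
\nabla_{{\bf{P_1}}} I({\bf{x_2;y}}) \;\propto\; -\,snr^{2}\;{\bf{H_1^{\dag}}}\,\Sigma^{-1}\,{\bf{H_2P_2E_2P_2^{\dag}H_2^{\dag}}}\,\Sigma^{-1}\,{\bf{H_1P_1}},
\end{equation}
and three discrepancies with the theorem are left unaddressed: (i) the minus sign --- increasing the interferer's power cannot increase $I({\bf{x_2;y}})$, so the gradient obtained through $\nabla_{\Sigma}$ is negative, yet no sign appears in your concluding structure or in the theorem, and you never reconcile this; (ii) the push-through identity only converts the right-hand factor $\Sigma^{-1}{\bf{H_1P_1}}$, leaving a leading ${\bf{H_1^{\dag}}}\Sigma^{-1}$ that cannot be commuted past ${\bf{H_2P_2E_2P_2^{\dag}H_2^{\dag}}}$, so the theorem's ordering ${\bf{H_2P_2E_2P_2^{\dag}H_2^{\dag}H_1^{\dag}H_1P_1}}\left({\bf{P_1^{\dag}H_1^{\dag}H_1P_1+I}}\right)^{-1}$ does not follow from your expression; and (iii) the $snr$ factors present in your $\Sigma$ and in your push-through identity are absent from the theorem's inverse (and from \eqref{a42}), and you do not track where they go. As written, ``the form stated in the theorem then follows'' is an assertion rather than a derivation; to close the argument you would need either to perform the direct differentiation the paper gestures at, or to show explicitly how your chain-rule expression collapses to (or corrects) the stated right-hand side, including sign and $snr$ bookkeeping.
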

\begin{proof}
The proof follows similar steps of the proof of Theorem~\ref{theorem2.4}. 
\end{proof}
When each user is transmitting over a single channel, the new relation in Theorem \ref{gradientwithnoise} will be more clearly understood in terms of the effect of the interference plus noise power scaling on the gradient, see \cite{SamahMCPjournal}. In other words, Theorem \ref{gradientwithnoise} is of particular relevance to understand how the rate changes and can be adapted based on the changes of the channel and power of the interference.
\begin{corollary}
The gradient of the conditional mutual information will be as follows,
\small
\begin{equation}
\label{a260}
 {\bf{\nabla_{P_1}}}I({\bf{x_{1};y}|x_{2}})= {\bf{\nabla_{P_1}}}I({\bf{x_{1},x_{2};y}})- {\bf{\nabla_{P_1}}}I({\bf{x_{2};y}})
\end{equation}
\begin{equation}
\label{a260b1}
 {\bf{\nabla_{P_2}}}I({\bf{x_{1};y}|x_{2}})= {\bf{\nabla_{P_2}}}I({\bf{x_{1},x_{2};y}})- {\bf{\nabla_{P_2}}}I({\bf{x_{2};y}})
\end{equation}
\begin{equation}
\label{a271b1}
 {\bf{\nabla_{P_1}}}I({\bf{x_{2};y}|x_{1}})={\bf{\nabla_{P_1}}}I({\bf{x_{1},x_{2};y}})-{\bf{\nabla_{P_1}}}I({\bf{x_{1};y}})
\end{equation}
\begin{equation}
\label{a271}
 {\bf{\nabla_{P_2}}}I({\bf{x_{2};y}|x_{1}})={\bf{\nabla_{P_2}}}I({\bf{x_{1},x_{2};y}})-{\bf{\nabla_{P_2}}}I({\bf{x_{1};y}})
\end{equation}
\normalsize
\end{corollary}
\begin{proof}
The proof of the corollary follows from the chain rule of mutual information and the gradient of the mutual information derived for the sum rate and non-conditional rates. The second term in \eqref{a260} and \eqref{a271} are given in Theorem~\ref{gradientwithnoise}. The second term of \eqref{a260b1} and \eqref{a271b1} are provided with proof in \cite{SamahMCPjournal}.
\end{proof}
\section{Mutiuser MMSE and Mutual information Closed Forms}
The only known explicit closed forms of the MMSE and the mutual information are for BPSK inputs, \cite{34}, and QPSK \cite{samahMAPtele} for the SISO channel. In the SISO case, the relation between the mutual information and the MMSE for the signle user setup allows for the derivation of this form. However, for a MAC channel, as an example of multiuser channels, we will first consider a unit power, unit channel gains for simplicity. We will then capitalize on the new unveiled multiuser I-MMSE generalization of the relation between the mutual information and the two user MMSE with the covariance. Therefore, we derive new explicit closed form expressions of the MMSE and the mutual information for each user in the two user Gaussian MAC driven by BPSK. To derive a closed form expression of the conditional and non-conditional mutual information for each user under the MAC, we capitalize again on the chain rule of the mutual information stated in \eqref{a391}. The first user which will be decoded first given that the other user is noise, therefore the MMSE and the mutual information of user 1 will be respectively, given by the following theorems.
\begin{theorem}
\label{thm6}
The non-conditional $mmse_1'(snr)$ of the user 1 decoded first and scaled with the other user as noise is given by:
\vspace{-0.15cm}
\small
\begin{equation}
mmse_1'(snr)=1-\frac{1}{4\sqrt{\pi}}\int_{y \in \textsl{R}}tanh\left(\frac{\sqrt{snr}}{2}\bf{y}\right)e^{\frac{-\left({\bf{y}}-\sqrt{snr}\right)^{2}}{4}}d\bf{y}
\end{equation}
\end{theorem}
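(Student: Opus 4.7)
My plan is to specialize the general MAC model to the unit-channel, unit-power, scalar BPSK setting and then evaluate the resulting Bayes-estimator expectation as a one-dimensional Gaussian integral, leveraging the same $\tanh$ identity that appears in the single-user BPSK derivation of Guo, Shamai, and Verd\'u~\cite{34}. First I would set $n_{r}=n_{t}=1$, ${\bf H_{1}}={\bf H_{2}}={\bf P_{1}}={\bf P_{2}}=1$, and take $x_{1},x_{2}\in\{-1,+1\}$ equiprobable, so that the channel in \eqref{1} reduces to $y=\sqrt{snr}\,(x_{1}+x_{2})+n$. Since user~1 is decoded first while user~2 is treated as noise, I would rewrite the channel as $y=\sqrt{snr}\,x_{1}+\tilde n$ with $\tilde n=\sqrt{snr}\,x_{2}+n$, and absorb the BPSK interferer into the effective noise in the ``scaled with the other user noise'' sense of the theorem; this yields an effective single-user BPSK observation with noise variance two. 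Applying the standard formula $\hat x=\tanh(ay/\sigma^{2})$ for the Bayes estimator of a BPSK input in Gaussian noise, with $a=\sqrt{snr}$ and $\sigma^{2}=2$, produces the posterior mean $\hat{x}_{1}=\tanh(\sqrt{snr}\,y/2)$ that appears inside the integrand.

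Next I would use the orthogonality identity $mmse_{1}'(snr)=1-\mathbb{E}[x_{1}\hat{x}_{1}]$ and exploit the sign symmetry $(x_{1},y)\mapsto(-x_{1},-y)$ together with the oddness of $\tanh$ to collapse the averaging over $x_{1}$ into the single conditional expectation $\mathbb{E}[\hat{x}_{1}\mid x_{1}=+1]$. Under this conditioning, $y=\sqrt{snr}+\tilde n$ with $\tilde n\sim\mathcal{N}(0,2)$, whose density is $(2\sqrt{\pi})^{-1}e^{-(y-\sqrt{snr})^{2}/4}$; combined with the $\tfrac{1}{2}$ weight coming from the sign symmetrization this recovers the $\tfrac{1}{4\sqrt{\pi}}$ prefactor and yields the claimed integral. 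The argument can be sanity-checked at the endpoints: at $snr=0$ the $\tanh$ vanishes and $mmse_{1}'\to 1=\mathrm{Var}(x_{1})$, while as $snr\to\infty$ the integral approaches one and $mmse_{1}'\to\tfrac12$, which matches the irreducible uncertainty on the event $x_{1}+x_{2}=0$ when $x_{2}$ is unknown.

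The main obstacle I foresee is justifying the collapse of the exact two-user BPSK posterior mean---which is a ratio of a three-term Gaussian mixture---to the compact $\tanh(\sqrt{snr}\,y/2)$ form, because the exact LLR carries an additional $\log\cosh$ correction and is not a single hyperbolic tangent. This step hinges on the ``scaled with the other user noise'' interpretation, in which the discrete interferer is replaced by a Gaussian proxy whose variance doubles the effective noise, so that the log-likelihood ratio becomes linear in $y$ and the variance-two kernel in the statement drops out cleanly from the conditional density of $y$ given $x_{1}=+1$. Once that Gaussianization is in place, the remaining manipulations---orthogonality, sign symmetrization, and a one-dimensional Gaussian integration---are mechanical and can be carried out exactly as in the point-to-point BPSK derivation of~\cite{34}.
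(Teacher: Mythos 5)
Your outline follows the same route as the paper's Appendix~B, Part~I (specialize to the scalar unit-gain model, Gaussianize user~2 into an effective noise of variance $2$, take $\widehat{x}_1=\tanh(\sqrt{snr}\,y/2)$, apply orthogonality, and reduce by sign symmetry to a single Gaussian integral), but the step on which the stated constant hinges is wrong. The sign symmetrization does \emph{not} leave a residual weight of $\tfrac12$: writing
$\mathbb{E}[x_1\widehat{x}_1]=\tfrac12\,\mathbb{E}[\widehat{x}_1\,|\,x_1=+1]-\tfrac12\,\mathbb{E}[\widehat{x}_1\,|\,x_1=-1]$
and using the oddness of $\tanh$ together with the reflection $y\mapsto-y$ gives $\mathbb{E}[\widehat{x}_1\,|\,x_1=-1]=-\mathbb{E}[\widehat{x}_1\,|\,x_1=+1]$, so the two halves \emph{add} and you get the full conditional expectation, with density $\frac{1}{2\sqrt{\pi}}e^{-(y-\sqrt{snr})^2/4}$. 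Carried out faithfully, your steps therefore yield
$mmse_1'(snr)=1-\frac{1}{2\sqrt{\pi}}\int_{\mathbb{R}}\tanh\bigl(\tfrac{\sqrt{snr}}{2}y\bigr)e^{-(y-\sqrt{snr})^2/4}\,dy$,
i.e.\ a prefactor $\frac{1}{2\sqrt{\pi}}$, not the claimed $\frac{1}{4\sqrt{\pi}}$, and this expression decays to $0$ as $snr\to\infty$ rather than saturating at $\tfrac12$. Your own high-SNR sanity check exposes the inconsistency: the $\tfrac12$ floor is a property of the exact model with a \emph{discrete} interferer (the ambiguity on $x_1+x_2=0$), whereas under your Gaussian proxy with fixed variance $2$ the effective SNR grows without bound and the MMSE must vanish, so the limit cannot be used to certify the Gaussianized derivation.

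By contrast, the paper arrives at the $\frac{1}{4\sqrt{\pi}}$ constant by computing $\mathbb{E}_y[\widehat{x}_1\widehat{x}_1^{\dag}]$ directly as an integral against the output mixture density in the complex domain (prefactor $\frac{1}{8\pi}$), folding the two Gaussian kernels together by the same oddness argument, and then marginalizing the complex variable to the real line; it is precisely this bookkeeping that produces the extra factor of $2$ in the denominator and hence the $\tfrac12$ saturation that the paper attributes to the interference. So to prove the theorem as stated you cannot get the constant from a ``$\tfrac12$ from symmetrization'' --- you would need either to reproduce the paper's complex-to-real accounting or to work with the exact two-user posterior (the three-lobe Gaussian mixture you correctly identify as the obstacle) and show how the stated $\tanh$ form with the $\frac{1}{4\sqrt{\pi}}$ weight emerges from it; your proposal does neither, and in addition the replacement of the interferer $\sqrt{snr}\,x_2$ (power $snr$, not $1$) by a unit-variance Gaussian is asserted rather than justified.
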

\normalsize
\begin{proof}
See Appendix B, part I.
\end{proof}
\begin{theorem}
\label{thm7}
The non-conditional mutual information $I_1'(snr)$ of the user 1 decoded first and scaled with the other user noise is given by:
\vspace{-0.05cm}
\small
\begin{equation}
I_1'(snr)=\frac{snr}{4}-\frac{1}{4\sqrt{\pi}}\int_{y \in \textsl{R}}log~cosh\left(\frac{\sqrt{snr}}{2}{\bf{y}}\right)e^{\frac{-\left({\bf{y}}-\sqrt{snr}\right)^{2}}{4}} d\bf{y}
\end{equation}
\end{theorem}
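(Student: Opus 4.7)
The plan is to derive the closed form by combining the non-conditional I-MMSE identity from Theorem \ref{theoremConditional} with the MMSE closed form proved in Theorem \ref{thm6}. Since $x_1$ is decoded first while the BPSK-modulated interferer $x_2$ is treated as noise, Theorem \ref{theoremConditional} gives $\frac{d I_1'(snr)}{d\,snr} = mmse_1'(snr)$ after absorbing the effective-noise scaling $\gamma$ into the definition of $mmse_1'(\cdot)$. Hence $I_1'(snr) = \int_0^{snr} mmse_1'(\gamma)\,d\gamma$, and the theorem reduces to verifying that the proposed closed form satisfies this differential relation together with the boundary condition $I_1'(0)=0$.

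The cleanest route is verification by differentiation. Let $F(snr)$ denote the right-hand side of the claim. Apply Leibniz's rule for differentiation under the integral, using
\begin{equation*}
\partial_{snr}\log\cosh\!\Bigl(\tfrac{\sqrt{snr}}{2}y\Bigr) = \tfrac{y}{4\sqrt{snr}}\tanh\!\Bigl(\tfrac{\sqrt{snr}}{2}y\Bigr),
\end{equation*}
together with
\begin{equation*}
\partial_{snr} e^{-(y-\sqrt{snr})^2/4} = \tfrac{y - \sqrt{snr}}{4\sqrt{snr}}\, e^{-(y-\sqrt{snr})^2/4}.
\end{equation*}
The derivative of $F$ then splits into the constant $1/4$ and two $y$-integrals. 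Performing an integration by parts in $y$ on the integral coming from the kernel derivative (boundary contributions vanish because of the Gaussian decay) trades the explicit $y$ prefactors against $\partial_y$ of the kernel. The linear-in-$y$ pieces cancel, and what remains is precisely $mmse_1'(snr)$ as stated in Theorem \ref{thm6}. The boundary value $F(0)=0$ is immediate because $\log\cosh(0)=0$, so by uniqueness of the solution to $F' = mmse_1'$ with $F(0)=0$ we conclude $F(snr) = I_1'(snr)$.

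Alternatively, a direct derivation writes $I_1'(snr) = h(\mathbf{y}) - h(\mathbf{y}\mid x_1)$ and evaluates the two differential entropies, expressing $p_y$ and $p_{y\mid x_1}$ as equal-weight sums of shifted Gaussians and collapsing each symmetric pair into a $\cosh$ factor. The main obstacle, in either route, is the combinatorial and algebraic bookkeeping of the BPSK mixture: the coefficient $\tfrac{1}{4\sqrt{\pi}}$ (half of the variance-$2$ Gaussian density) and the constant term $snr/4$ (rather than $snr$ or $snr/2$) must be shown to arise from the four equally likely values of $\sqrt{snr}(x_1+x_2)$ and the symmetry around zero that collapses the four-term mixture onto a single $\log\cosh$ integrand weighted by a shifted Gaussian. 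A useful sanity check is the low-SNR expansion: Taylor-expanding $\log\cosh$ to second order yields $F(snr) \approx \tfrac{snr}{4} - \tfrac{snr^2}{\mathrm{const}} + O(snr^3)$, which should match the quadratic coefficient in the mutual information expansion developed in the low-SNR analysis of Section V.
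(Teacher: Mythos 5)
Your overall plan --- obtain $I_1'$ by integrating the $mmse_1'$ of Theorem~\ref{thm6}, with the interferer lumped into the noise and the single-user I-MMSE relation applied at the scaled SNR --- is the same route the paper takes in Appendix~B, Part~II (the paper simply writes down the SISO BPSK closed form at the SNR scaled by the noise-plus-interference variance $\sigma^2=2$). The gap is in your central verification step. If you differentiate the claimed closed form $F(snr)$ under the integral and integrate by parts as you describe, the linear-in-$y$ pieces do \emph{not} cancel: using $(y-\sqrt{snr})\,e^{-(y-\sqrt{snr})^2/4}=-2\,\partial_y e^{-(y-\sqrt{snr})^2/4}$ one obtains
\begin{equation*}
F'(snr)=\frac14-\frac{1}{4\sqrt{\pi}}\int_{\mathbb{R}}\Bigl[\tfrac12\tanh\bigl(\tfrac{\sqrt{snr}}{2}y\bigr)+\tfrac14\,\mathrm{sech}^2\bigl(\tfrac{\sqrt{snr}}{2}y\bigr)\Bigr]e^{-(y-\sqrt{snr})^2/4}\,dy,
\end{equation*}
so a leftover $\mathrm{sech}^2$ (i.e.\ $1-\tanh^2$) term survives. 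Turning such an expression into an MMSE requires the nontrivial estimator identity $\mathbb{E}[x_1\widehat{x}_1]=\mathbb{E}[\widehat{x}_1^{\,2}]$, equivalently that the expectation of $\mathrm{sech}^2(\cdot)$ under the conditional output law equals the MMSE; you neither state nor prove it, and asserting that what remains is precisely $mmse_1'$ skips the only nontrivial part of the verification.

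Moreover, the scaling you propose to absorb is exactly where the stated constants live. Treating user 2 as noise gives $I_1'(snr)=I_{\mathrm{BPSK}}(snr/\sigma^2)$ (up to normalization), whose derivative is $\tfrac{1}{\sigma^2}\,mmse_{\mathrm{BPSK}}(snr/\sigma^2)$: the factor $1/\sigma^2$ multiplies the MMSE and cannot be folded into its argument as a redefinition of $mmse_1'$. Indeed, evaluating the display above at $snr\to 0$ gives $F'(0)=\tfrac14-\tfrac18=\tfrac18$, while Theorem~\ref{thm6} gives $mmse_1'(0)=1$, so the differential relation $F'=mmse_1'$ that your uniqueness argument rests on does not hold with the stated coefficients; reconciling the constants $snr/4$ and $\tfrac{1}{4\sqrt{\pi}}$ is precisely the bookkeeping you defer as the main obstacle rather than resolve. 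Your alternative route via $h({\bf{y}})-h({\bf{y}}\mid x_1)$ would be a genuinely different, self-contained derivation (the paper does not do this), but it is also only sketched, and with a BPSK interferer $p_{y|x_1}$ is itself a two-component mixture, so the collapse to a single $\log\cosh$ term only occurs after the interferer is replaced by Gaussian noise of unit variance --- a modeling step that must be made explicit, since it is what produces the scaled kernel $e^{-(y-\sqrt{snr})^2/4}$ in the first place.
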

\normalsize
\begin{proof}
See Appendix B, part II.
\end{proof}
However, the conditional MMSE and conditional mutual information of user 2 that will be decoded next under the MAC given that the first user is decoded first are given on the following theorems.
\begin{theorem}
\label{thm8}
The conditional $mmse_2(snr)$ of user 2 decoded second given that user 1 in the MAC channel is decoded first with BPSK inputs is given by:
\vspace{-0.15cm}
\small
\begin{equation}
mmse_2(snr)=1-\frac{1}{\sqrt{2\pi}}\int_{y \in \textsl{R}}{tanh\left(\sqrt{{snr}}\bf{y}\right)e^{\frac{-\left({\bf{y}}-\sqrt{snr}\right)^{2}}{2}}d\bf{y}}
\end{equation}
\end{theorem}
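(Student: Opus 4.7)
The plan is to reduce the two-user MMSE computation to a single-user BPSK MMSE by exploiting the successive decoding assumption, and then evaluate it with the standard BPSK posterior mean identity.

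First I would argue that, because user 1 is decoded first and its codeword is available when decoding user 2, computing $mmse_2(snr)$ is equivalent to computing the MMSE of $x_2$ given the observation together with $x_1$. Under the simplifying assumption of unit channels and unit precoders stated in the section preamble, the observation conditioned on $x_1$ is $\tilde{y} \triangleq y - \sqrt{snr}\, x_1 = \sqrt{snr}\, x_2 + n$, i.e.\ a canonical scalar Gaussian BPSK channel with $x_2 \in \{-1,+1\}$ equiprobable. Consequently the conditional MMSE collapses to the single-user BPSK MMSE, and it suffices to work with $\tilde{y}$ (which I will relabel $y$ in the integral for consistency with the theorem statement).

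Next I would compute the posterior mean $\widehat{x}_2 = \mathbb{E}[x_2 \mid \tilde{y}]$. Applying Bayes' rule with equal priors and the Gaussian likelihood $p(\tilde{y}\mid x_2) = (2\pi)^{-1/2}\exp\bigl(-(\tilde{y}-\sqrt{snr}\,x_2)^2/2\bigr)$, the standard manipulation yields
\begin{equation}
\widehat{x}_2 \;=\; \frac{e^{\sqrt{snr}\,\tilde{y}}-e^{-\sqrt{snr}\,\tilde{y}}}{e^{\sqrt{snr}\,\tilde{y}}+e^{-\sqrt{snr}\,\tilde{y}}} \;=\; \tanh\!\bigl(\sqrt{snr}\,\tilde{y}\bigr).
\end{equation}
Because $x_2^2 = 1$, the orthogonality principle gives $mmse_2(snr) = 1 - \mathbb{E}[\widehat{x}_2^{\,2}] = 1 - \mathbb{E}[x_2\,\widehat{x}_2]$, so the task reduces to evaluating $\mathbb{E}\!\left[x_2\tanh(\sqrt{snr}\,\tilde{y})\right]$.

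Finally I would split the expectation over the two equiprobable values of $x_2$ and collapse them into a single integral. Conditioning on $x_2=+1$ yields $\tilde{y}\sim\mathcal{N}(\sqrt{snr},1)$, while conditioning on $x_2=-1$ yields $\tilde{y}\sim\mathcal{N}(-\sqrt{snr},1)$ with an additional factor $-1$ in front of $\tanh$. Substituting $\tilde{y}\to -\tilde{y}$ in the second integral and using the odd symmetry of $\tanh$ shows the two contributions are equal, so
\begin{equation}
\mathbb{E}[x_2\,\widehat{x}_2] \;=\; \frac{1}{\sqrt{2\pi}}\int_{\mathbb{R}}\tanh\!\bigl(\sqrt{snr}\,y\bigr)\,e^{-(y-\sqrt{snr})^2/2}\,dy,
\end{equation}
which plugged back into $mmse_2 = 1 - \mathbb{E}[x_2\widehat{x}_2]$ yields exactly the stated formula.

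There is no real technical obstacle here; the derivation is essentially the scalar BPSK calculation of Guo--Shamai--Verdu recast for the post-cancellation residual channel. The only subtlety worth being careful about is justifying the reduction in the first step, namely that conditioning on a perfectly decoded $x_1$ collapses the complex vector model of \eqref{1} to the real scalar BPSK observation used in the integrand (so that the Gaussian density appearing is the real $\mathcal{N}(\sqrt{snr},1)$ rather than its complex counterpart), and handling the symmetry that folds the $x_2=\pm 1$ integrals into the single expression above.
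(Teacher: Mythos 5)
Your proposal is correct and follows essentially the same route as the paper: Appendix~B, Part~II likewise removes the decoded $x_1$ so that $\mathbb{E}[{\bf x_2|y}]=\mathbb{E}[{\bf x_2|y-x_1}]=\tanh(\sqrt{snr}\,y)$ on the residual unit-variance real Gaussian channel, and then applies the Guo--Shamai--Verd\'u BPSK calculation (the $1-\mathbb{E}[\widehat{x}_2^2]$ identity together with the $\tanh$/Gaussian symmetry folding) to obtain the stated integral. The only cosmetic difference is that you evaluate $\mathbb{E}[x_2\widehat{x}_2]$ by conditioning on $x_2=\pm1$, whereas the paper integrates $\widehat{x}_2^2$ against $p_y$ directly; these are the same computation.
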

\normalsize
\begin{proof}
The proof follows similar steps as in Appendix B, Part I, and follows the formula in \cite{34}.
\end{proof}
\begin{theorem}
\label{thm9}
The conditional mutual information $I_2'(snr)$ of user 2 decoded second given that user 1 in the MAC channel is decoded first with BPSK inputs is given by:
\vspace{-0.05cm}
\small
\begin{equation}
I_2'(snr)={snr}-\frac{1}{\sqrt{2\pi}}\int_{y \in \textsl{R}}log~cosh\left(\sqrt{{snr}}{\bf{y}}\right)e^{\frac{-\left({\bf{y}}-\sqrt{snr}\right)^{2}}{2}} d\bf{y}
\end{equation}
\end{theorem}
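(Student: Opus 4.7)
The plan is to reduce $I(\mathbf{x_2};\mathbf{y}|\mathbf{x_1})$ to a single-user BPSK-in-AWGN calculation. Conditioning on $\mathbf{x_1}$ removes it from the channel, leaving $\tilde{\mathbf{y}} := \mathbf{y}-\sqrt{snr}\,\mathbf{x_1} = \sqrt{snr}\,\mathbf{x_2} + \mathbf{n}$, which is a scalar BPSK-AWGN channel at SNR equal to $snr$; thus $I_2'(snr) = I(\mathbf{x_2};\tilde{\mathbf{y}})$. Moreover, because the interferer is now perfectly known, the cross-covariance term $\psi(snr)$ of Theorem~\ref{theorem1.0} vanishes in the conditional version of the I-MMSE identity (Theorem~\ref{theoremConditional}), so one may equivalently work with $dI_2'/dsnr = mmse_2(snr)$ and the MMSE expression from Theorem~\ref{thm8}.

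For the main calculation, I would evaluate $I(\mathbf{x_2};\tilde{\mathbf{y}}) = H(\tilde{\mathbf{y}}) - H(\mathbf{n})$ directly. The density of $\tilde{\mathbf{y}}$ is the equal mixture of $\mathcal{N}(\pm\sqrt{snr},1)$, which rewrites as
\[
p(\tilde y) = \frac{1}{\sqrt{2\pi}}\,e^{-(\tilde y^2+snr)/2}\cosh(\sqrt{snr}\,\tilde y).
\]
Taking $-\log p$ and averaging, together with $\mathbb{E}[\tilde y^2] = snr + 1$, the Gaussian normalization cancels $H(\mathbf{n}) = \tfrac{1}{2}\log(2\pi e)$ and leaves
\[
I_2'(snr) = snr - \mathbb{E}\bigl[\log\cosh(\sqrt{snr}\,\tilde{\mathbf{y}})\bigr].
\]
To rewrite the expectation in the theorem's form, I would use that $\log\cosh(\cdot)$ is even: under the substitution $\tilde y \mapsto -\tilde y$, the conditional density of $\tilde{\mathbf{y}}$ given $\mathbf{x_2}=-1$ maps to the one given $\mathbf{x_2}=+1$, so the two conditional expectations of $\log\cosh(\sqrt{snr}\,\tilde{\mathbf{y}})$ coincide with the unconditional one, giving the claimed single-Gaussian integral $\tfrac{1}{\sqrt{2\pi}}\int \log\cosh(\sqrt{snr}\,\mathbf{y})\,e^{-(\mathbf{y}-\sqrt{snr})^2/2}\,d\mathbf{y}$.

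An alternative route, more in the spirit of the paper, is to integrate $mmse_2$ from Theorem~\ref{thm8} over $[0,snr]$ via the I-MMSE identity. The main technical obstacle on this route is verifying
\[
\frac{d}{d\gamma}\!\left[\frac{1}{\sqrt{2\pi}}\!\int \log\cosh(\sqrt{\gamma}\,y)\,e^{-(y-\sqrt{\gamma})^2/2}\,dy\right] = \frac{1}{\sqrt{2\pi}}\!\int \tanh(\sqrt{\gamma}\,y)\,e^{-(y-\sqrt{\gamma})^2/2}\,dy,
\]
which requires differentiating under the integral sign, carefully accounting for the drift introduced by the moving Gaussian mean $\sqrt{\gamma}$, and matching the trivial initial condition at $\gamma=0$. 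This is where the bookkeeping is heaviest; the direct entropy calculation described above sidesteps it, which is why I would organize the proof around the $H(Y)-H(Y|X)$ evaluation and relegate the I-MMSE identity to a consistency check against Theorem~\ref{thm8}.
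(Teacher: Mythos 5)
Your derivation is correct and it reaches exactly the stated formula, but it is organized differently from the paper. The paper's proof is essentially a reduction-plus-citation: after user 1 is cancelled, the residual channel is a single-user BPSK-AWGN channel, the posterior mean becomes the $\tanh(\sqrt{snr}\,{\bf y})$ estimate of \eqref{5.5.011022}, $mmse_2(snr)$ follows as in Theorem~\ref{thm8}, and the mutual-information closed form is then imported from the known single-user result of \cite{34} (i.e., implicitly integrating the MMSE via the single-user I-MMSE identity). You perform the same reduction $\tilde{\bf y}={\bf y}-\sqrt{snr}\,{\bf x_1}=\sqrt{snr}\,{\bf x_2}+{\bf n}$, but then evaluate $I({\bf x_2};\tilde{\bf y})=H(\tilde{\bf y})-H({\bf n})$ directly from the Gaussian-mixture density $p(\tilde y)=\tfrac{1}{\sqrt{2\pi}}e^{-(\tilde y^2+snr)/2}\cosh(\sqrt{snr}\,\tilde y)$, using $\mathbb{E}[\tilde y^2]=snr+1$ and the evenness of $\log\cosh$ to collapse the two-Gaussian average into the single shifted Gaussian appearing in the theorem. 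This buys a self-contained, elementary proof that avoids both the citation to \cite{34} and the differentiation-under-the-integral bookkeeping you correctly flag as the painful part of the I-MMSE route; the symmetry step you use is the same trick the paper employs (via $\tanh(-x)=-\tanh(x)$) in Appendix B. Your answer is in nats, consistent with the paper's convention, and the unit-variance real-noise normalization matches the paper's stated kernel.

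One small caution on a side remark: you justify $dI_2'/dsnr=mmse_2(snr)$ by saying that $\psi(snr)$ ``vanishes in the conditional version of the I-MMSE identity (Theorem~\ref{theoremConditional})''; as stated in the paper, that theorem keeps $\psi(snr)$ in the derivative of $I({\bf x_2};{\bf y}|{\bf x_1})$, and Appendix B distinguishes $I_2$ (derivative $mmse_2+\psi$) from $I_2'$ (derivative $mmse_2$), the latter being the quantity in the displayed formula. Your main entropy computation does not rely on this remark, so it is not a gap in the proof, but the consistency check should be phrased against the single-user relation $dI_2'/dsnr=mmse_2(snr)$ rather than against Theorem~\ref{theoremConditional}.
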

\normalsize
\begin{proof}
The proof follows similar steps as in Appendix B, Part II, and follows the formula in \cite{34}.
\end{proof}
Notice that if both users are time sharing or decoded jointly, at such point, a maximum sum rate is acheivable, therefore, each user's rate will follow the one in Theorem~\ref{thm9}. Such case is similar to two parallel channels for each user, therefore, the sum rate is the sum of each individual rate. However, from Theorems~\ref{thm7} to \ref{thm8}, its straigntforward to conclude the following corollaries that defines the total MMSE and mutual information of a two user MAC driven by BPSK inputs.
\begin{corollary}
The total $mmse(snr)$ of two users MAC channel with BPSK inputs is given by:
\begin{equation}
mmse(snr)=mmse_1'(snr)+mmse_2'(snr)-\psi(snr)
\end{equation}
\end{corollary}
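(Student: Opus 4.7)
The plan is to derive the identity as a direct consequence of Theorem~\ref{theorem1.0}, combined with the chain rule for mutual information and the classical single-user I-MMSE relation applied in two ``interference-as-noise'' configurations. In the symmetric BPSK MAC of Section IV (unit channels, unit powers), this becomes essentially an exercise in bookkeeping rather than a fresh computation.

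First I would rearrange Theorem~\ref{theorem1.0} to obtain $mmse(snr) = \frac{dI(x_1,x_2;y)}{dsnr} - \psi(snr)$, so that the only remaining task is to identify the total derivative of the joint mutual information with the sum $mmse_1'(snr) + mmse_2'(snr)$. To do this I would apply the chain rule in both of its symmetric forms,
\begin{equation*}
I(x_1,x_2;y) = I(x_1;y) + I(x_2;y|x_1) = I(x_2;y) + I(x_1;y|x_2),
\end{equation*}
and average the two decompositions. Differentiating in $snr$ and invoking the classical single-user I-MMSE identity of~\cite{34} for each non-conditional term $I(x_i;y)$---where the interfering user appears inside the effective noise $\sqrt{snr}\,x_j + \bf{n}$---then gives $dI(x_i;y)/dsnr = mmse_i'(snr)$, precisely the quantity whose closed form for BPSK is computed in Theorem~\ref{thm6} and in its symmetric counterpart.

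Next, the conditional pieces that remain after the chain-rule averaging are handled by Theorem~\ref{theoremConditional}, which relates each $dI(x_i;y|x_j)/dsnr$ to the per-user conditional MMSE plus a covariance contribution. When the two chain-rule identities are summed and halved, the conditional MMSEs and the $\psi$ contributions combine so that, after invoking the BPSK symmetry of the setup, what remains is exactly $dI(x_1,x_2;y)/dsnr = mmse_1'(snr) + mmse_2'(snr)$. Substituting this identification back into the rearrangement of Theorem~\ref{theorem1.0} then delivers the claimed corollary.

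The main obstacle, I expect, is the careful handling of the SNR dependence in the single-user I-MMSE step: because the ``noise'' $\sqrt{snr}\,x_j$ itself depends on $snr$, one cannot apply the identity of~\cite{34} verbatim, and an appropriate change of variable (the $\gamma snr$ rescaling already highlighted in Theorem~\ref{theoremConditional}) must be inserted in order to reduce the calculation to the effective-noise scalar channel analyzed in Theorem~\ref{thm6}. Once this rescaling is in place, the BPSK symmetry makes the final identification of $\psi(snr)$ with the discrepancy between the sum of the per-user ``interference-as-noise'' MMSEs and the joint MAC MMSE immediate.
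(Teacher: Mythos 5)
Your proposal hinges on reading $mmse_2'(snr)$ as the ``symmetric counterpart'' of Theorem~\ref{thm6}, i.e.\ the interference-as-noise MMSE of user 2 obtained by applying the single-user I-MMSE identity to $I({\bf{x_2;y}})$. That is not the quantity the corollary refers to: in the paper $mmse_1'(snr)$ is the scaled MMSE of the user decoded \emph{first} (treating user 2 as noise), while $mmse_2'(snr)$ is defined, immediately after the corollary and throughout Appendix B Part II, as $mmse_2(snr)+\psi(snr)$, where $mmse_2(snr)$ is the \emph{conditional} MMSE of the user decoded second at full SNR (Theorem~\ref{thm8}, kernel $\tanh(\sqrt{snr}\,y)\,e^{-(y-\sqrt{snr})^2/2}$), not the half-scaled kernel of Theorem~\ref{thm6}. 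The paper's own proof is essentially a substitution built on the successive-decoding decomposition $I=I_1'+I_2'$: it uses $mmse(snr)=mmse_1(snr)+mmse_2(snr)$, the identification $mmse_1'(snr)=mmse_1(snr)$, and the fact that combining $\tfrac{dI}{dsnr}=mmse_1+mmse_2+\psi$ (Theorem~\ref{theorem1.0}) with $\tfrac{dI_1'}{dsnr}=mmse_1'$ and $\tfrac{dI_2'}{dsnr}=mmse_2$ forces $mmse_2':=mmse_2+\psi$; the corollary then follows by rearrangement. Your argument never invokes this decomposition or these identifications, so even if its algebra closed it would be proving a different identity.

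Moreover, the algebra does not close. Writing $A_i=\tfrac{d}{dsnr}I({\bf{x_i;y}})$ and $B_i$ for the conditional derivatives, averaging the two chain rules gives $\tfrac{dI}{dsnr}=\tfrac12(A_1+A_2)+\tfrac12(B_1+B_2)$. Applying Theorem~\ref{theoremConditional} to both conditional terms, the covariance contributions either cancel (the term $\psi$ in \eqref{psii} is the difference of the two cross-correlation traces and so changes sign when the roles of the users are exchanged) or, if you insist on carrying the same $+\psi$ in both, combine to a single $+\psi$; in neither case does the average reduce to $\tfrac{dI}{dsnr}=mmse_1'+mmse_2'$ under your reading of $mmse_2'$. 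Tracking the bookkeeping together with Theorem~\ref{theorem1.0} yields either $mmse=A_1+A_2-2\psi$ or $mmse=A_1+A_2$, neither of which is the claimed $mmse=mmse_1'+mmse_2'-\psi$. In short, the sentence ``the conditional MMSEs and the $\psi$ contributions combine so that what remains is exactly $dI/dsnr=mmse_1'+mmse_2'$'' is precisely the content to be proved and is asserted rather than derived; it only becomes true once $mmse_2'$ is taken, as in the paper, to equal $mmse_2+\psi$, at which point the averaging machinery is unnecessary and the corollary follows directly.
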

Where,
\begin{equation}
mmse_1'(snr)=mmse_1(snr)
\end{equation}
and,
\begin{equation}
mmse_2'(snr)=mmse_2(snr)+\psi(snr)
\end{equation}
\begin{proof}
See Appendix B, part II.
\end{proof}
We shall now capitalize on the unveiled connection between the mutual information and the MMSE plus the covariance or cross correlation of the input estimates. In the specfic case in \eqref{simplemodel}, and when both user inputs are decoded jointly, such covariance terms can be easily shown to be equal, over all permutations of the inputs. When the inputs are orthogonal or time-sharing, the covariance terms vanishes, i.e. $\psi(snr)= 0$. In turn, the joint mutual information is just the sum of the rates of both inputs or the integral of the MMSE of both users. However, a general form of the joint mutual information that clarifies the new fundamental relation with unequal covariances is when $\psi(snr)\neq 0$. This is given by the following corollary.
\begin{corollary}
The total $I(snr)$ of two users MAC channel with BPSK inputs is given by:
\begin{equation}
I(snr)=I_1'(snr)+I_2'(snr)
\end{equation}
\end{corollary}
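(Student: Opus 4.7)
The plan is to prove the corollary as a direct consequence of the chain rule of mutual information stated in \eqref{a391}, specialized to the closed-form BPSK quantities of Theorems~\ref{thm7} and~\ref{thm9}. Writing
$I({\bf{x_1,x_2;y}}) = I({\bf{x_1;y}}) + I({\bf{x_2;y|x_1}})$,
I would identify the first summand with $I_1'(snr)$ since Theorem~\ref{thm7} is precisely the non-conditional mutual information of user~1 decoded first while treating user~2's signal as additional Gaussian-like noise at unit power and unit channel (this is the exact regime of Theorem~\ref{thm7}). I would identify the second summand with $I_2'(snr)$ since conditioning on $\bf{x_1}$ removes the interference completely, leaving a single-user scalar BPSK channel whose closed form is supplied by Theorem~\ref{thm9} following the single-user I-MMSE formula of Guo et al.~\cite{34}. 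Summation yields $I(snr)=I_1'(snr)+I_2'(snr)$.

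As a cross-check that also showcases the paper's main identity, I would verify the corollary by integrating the multiuser I-MMSE relation of Theorem~\ref{theorem1.0}: $I(snr)=\int_0^{snr}\!\big(mmse(s)+\psi(s)\big)\,ds$. Substituting the preceding corollary $mmse(s)=mmse_1'(s)+mmse_2'(s)-\psi(s)$, the covariance term $\psi(s)$ cancels inside the integrand, which collapses to $mmse_1'(s)+mmse_2'(s)$. Applying the standard single-user I-MMSE identity to each piece then gives $\int_0^{snr}mmse_1'(s)\,ds=I_1'(snr)$ and $\int_0^{snr}mmse_2'(s)\,ds=I_2'(snr)$, reproducing the claim and confirming consistency with the chain-rule derivation.

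The main obstacle, which is more bookkeeping than analytical, is justifying the exact pairing between the decomposed MMSE components and the BPSK integrals of Theorems~\ref{thm6}--\ref{thm9}. Specifically, one has to argue that $\psi(s)$ is absorbed into $mmse_2'(s)$ in a way that makes $mmse_2'(s)$ coincide with the interference-free conditional MMSE of Theorem~\ref{thm8} (so that its integral is $I_2'(snr)$ of Theorem~\ref{thm9}), while $mmse_1'(s)=mmse_1(s)$ remains the user~1 MMSE with user~2 as noise (whose integral is $I_1'(snr)$ of Theorem~\ref{thm7}). Both identifications are transparent because once $\bf{x_1}$ is conditioned on, the channel reduces to a scalar BPSK-in-Gaussian-noise problem at the same snr, for which the formulas of Theorems~\ref{thm8}--\ref{thm9} apply verbatim by the classical result of~\cite{34}.
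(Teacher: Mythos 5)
Your proposal is correct and follows essentially the same route as the paper: Appendix~B, Part~II proves this corollary by exactly the bookkeeping you present as a cross-check --- $\frac{dI_1'(snr)}{dsnr}=mmse_1'(snr)$, $\frac{dI_2'(snr)}{dsnr}=mmse_2(snr)$, the fundamental relation of Theorem~\ref{theorem1.0}, and the decomposition $mmse(snr)=mmse_1'(snr)+mmse_2'(snr)-\psi(snr)$ so that $\psi(snr)$ cancels --- with the chain rule \eqref{a391} supplying the same identification of $I_1'(snr)$ and $I_2'(snr)$ with the closed forms of Theorems~\ref{thm7} and~\ref{thm9} that you invoke first. The one delicate point you flag (that $\psi$ is ``absorbed'' so that $\int_0^{snr}mmse_2'(s)\,ds=I_2'(snr)$, even though the paper sets $mmse_2'=mmse_2+\psi$ while $I_2'$ integrates $mmse_2$) is treated with exactly the same level of rigor in the paper's own proof, so your attempt is faithful to it rather than deficient relative to it.
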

$I_1'(snr)$ corresponds to the mutual information of user 1 given the other user is considered as noise and $I_2'(snr)$ is the mutual information of user 2 given that user 1 is decoded first. 
\begin{proof}
See Appendix B, part II.
\end{proof}
Figure 1 illustrates the mutual information per user in a MAC and the sum rates under equivalent powers and compared to the case of two users over SISO parallel channels. Its quite clear now, why the mutual information for a MAC Gaussian channel approaches 1.5 bits/sec/Hz when both inputs have similar power, incuring 0.5 bits/sec/Hz loss, as previously explained in \cite{newSamah}, and why it doesnt approach the one of parallel Gaussian channels unless unblanced power allocation takes place -the so called mercury/waterfilling, which approaches 2 bits/sec/Hz for BPSK at high SNRs. Moreover, when successive decoding takes place, an ufair rate allocation takes place, were the user decoded first will pay the price from his achievable rates. This can be also well explained in terms of the MMSE, where the MMSE of the user decoded first has a scaled SNR, with a scaling factor less than one. This will let this scaled MMSE not to decay to zero, however, it saturates at high SNR to a point above the zero, at 0.5 for this example.
\begin{figure}[ht!]
    \begin{center}
        \mbox{\includegraphics[width=3.9in,height=2.9in]{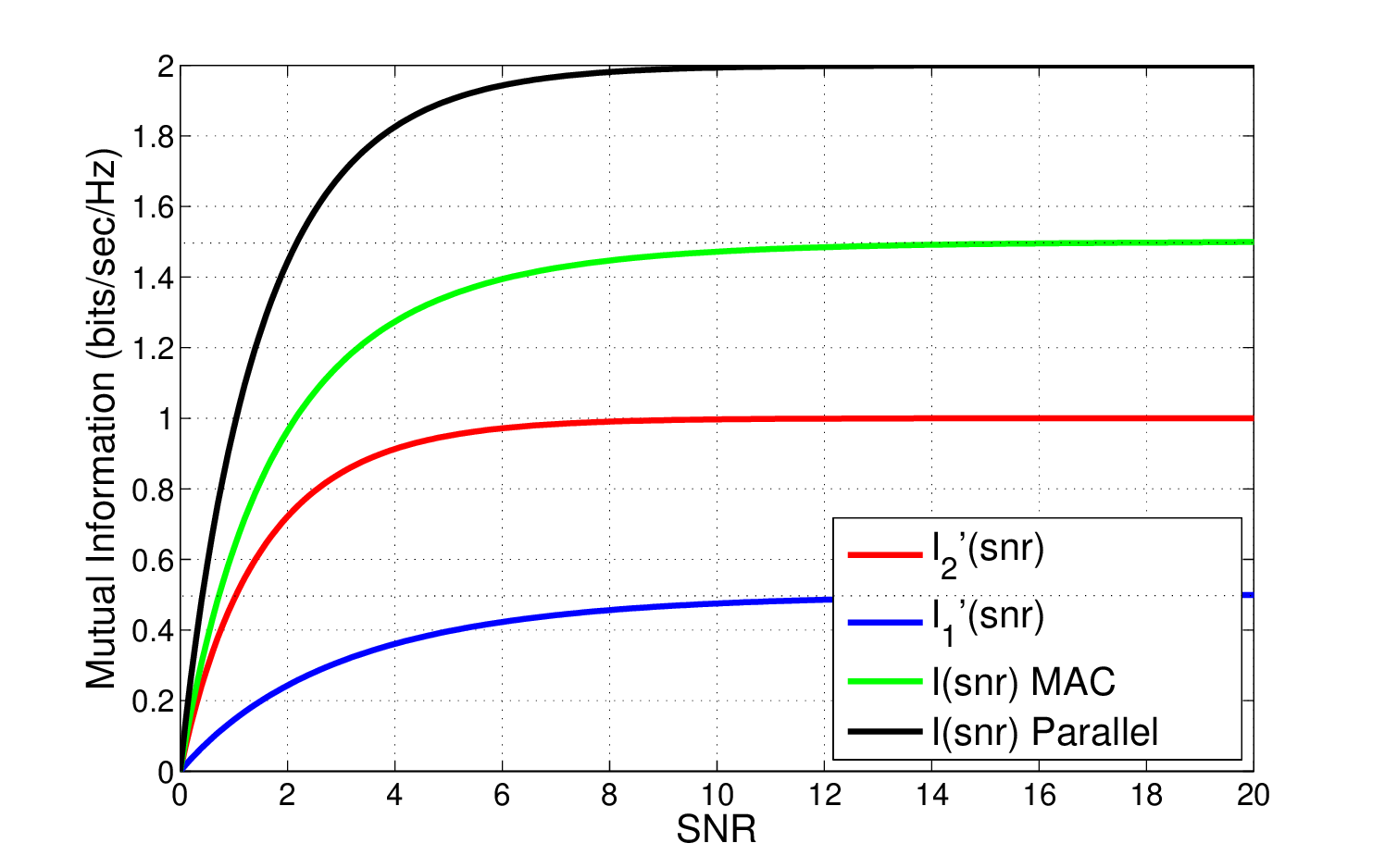}}
    \vspace{-1cm}
    \caption{The two-user per MAC rates and sum rates with BPSK inputs.}
    \label{fig:Figure11}
    \end{center}
    \label{Figure11}
		\vspace{-0.1cm}
 \end{figure}
\section{Multiuser I-MMSE in the Low-SNR Regime}
We now consider the two-user MAC Gaussian channel with arbitrary input distributions in the regime of low-snr. Consider a zero-mean uncorrelated complex inputs, with $\bf{\mathbb{E}[x_1x_1^{\dag}]=I}$, 
$\mathbb{E}[\bf{x_2x_2^{\dag}}]=I$, ~$\mathbb{E}[{\bf{x_1x_1}}^{T}]=0$, and $\mathbb{E}[{\bf{x_2x_2}}^{T}]=0$. We consider the low-snr expansion to the MMSE of equation \eqref{14}. Note that it can be easily deduced that the Taylor expansion of the non$-$linear MMSE in \eqref{14} will lead to the first order Taylor expansion of the linear MMSE for the Gaussian inputs setup. Thus, the low-snr expansion of the MMSE matrix can be expressed as:
\vspace{-0.1cm}
\small
\begin{multline}
\label{eq.15}
{\bf{E}}={\bf{I}}-{({\bf{H_1P_1}})}^{\dag}{\bf{H_1P_1}}.{snr}\\
-{\bf{(H_2P_2)^{\dag}H_2P_2}}.{snr}+ \mathcal{O}(snr^2),
\end{multline}
\normalsize
with $\bf{E=E_1+E_2}$. Consequently,
\small
\begin{multline}
mmse(snr)=Tr\left\{\bf{H_1P_1}\bf{E_1}{(\bf{H_1P_1})}^{\dag}\right\} \\
+Tr \left\{\bf{H_2P_2}\bf{E_2}{(\bf{H_2P_2})}^{\dag}\right\} 
\end{multline}
	\vspace{-0.35cm}
\begin{multline}
~~=Tr \left\{\bf{H_1P_1}{(\bf{H_1P_1})}^{\dag}\right\} \\
+Tr \left\{\bf{H_2P_2}{(\bf{H_2P_2})}^{\dag}\right\} \\
- Tr \left\{(\bf{H_1P_1}{(\bf{H_1P_1})}^{\dag})^{2}\right\}.snr \\
- Tr\left\{(\bf{H_2P_2}{(\bf{H_2P_2})}^{\dag})^{2}\right\}.snr + \mathcal{O}(snr^2).
\end{multline}
\normalsize
Note that due to our new result of Theorem~\ref{theorem1.0}, we cannot apply immediately the fundamental relationship between mutual information and MMSE in \cite{35}, \cite{34}. Therefore, the low-snr expansion of $\psi(snr)$, the covarince or the cross correlation between the inputs estimates can be expressed as:
\vspace{-0.1cm}
\small
\begin{multline}
\label{lowsnr}
\psi(snr)=- Tr\left\{\bf{H_1P_1}{(\bf{H_1P_1})}^{\dag}\bf{H_2P_2}{(\bf{H_2P_2})}^{\dag}\right\}.{snr} \\
-Tr\left\{\bf{H_2P_2}{(\bf{H_2P_2})}^{\dag}\bf{H_1P_1}{(\bf{H_1P_1})}^{\dag}\right\}.{snr}
\end{multline}
\normalsize
\vspace{-0.1cm}
Applying our new result, the low-snr Taylor expansion of the mutual information is given in the following theorem.
\begin{theorem}
\label{LOWSNR}
The low-snr Taylor expansion of the mutual information of the two user MAC is given by:
\vspace{-0.1cm}
\small
\begin{multline}
\label{lowsnr}
I(snr)=Tr \left\{\bf{H_1P_1}{(\bf{H_1P_1})}^{\dag}\right\}.snr + \\
Tr \left\{\bf{H_2P_2}{(\bf{H_2P_2})}^{\dag}\right\}.snr \\
- Tr\left\{(\bf{H_1P_1}{(\bf{H_1P_1})}^{\dag})^{2}\right\}.{snr^2} \\
- Tr\left\{(\bf{H_2P_2}{(\bf{H_2P_2})}^{\dag})^{2}\right\}.{snr^2} \\
- Tr\left\{\bf{H_1P_1}{(\bf{H_1P_1})}^{\dag}\bf{H_2P_2}{(\bf{H_2P_2})}^{\dag}\right\}.{snr^2} \\
-Tr\left\{\bf{H_2P_2}{(\bf{H_2P_2})}^{\dag}\bf{H_1P_1}{(\bf{H_1P_1})}^{\dag}\right\}.{snr^2}+\mathcal{O}(snr^3)
\end{multline}
\end{theorem}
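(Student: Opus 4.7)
The plan is to integrate the fundamental identity from Theorem~\ref{theorem1.0}, which gives
\[
I(snr)=\int_{0}^{snr}\bigl[mmse(\gamma)+\psi(\gamma)\bigr]\,d\gamma
\]
(using $I(0)=0$), and match the result to second order in $snr$. The $mmse$ part is already expanded for us in the excerpt, so plugging in that expansion and integrating $\gamma$ and $1$ term-by-term yields the $O(snr)$ coefficients and part of the $O(snr^2)$ coefficients directly. What remains is the contribution of $\psi(snr)$, which is what distinguishes the multiuser case from the point-to-point I-MMSE integration.

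To expand $\psi(snr)$ at low $snr$, I would Taylor-expand the joint posterior $p_{x_1,x_2\mid y}$ in powers of $\sqrt{snr}$ and then compute the conditional means $\widehat{x}_i=\mathbb{E}_{x_i\mid y}[x_i\mid y]$ to leading order. Using the assumptions $\mathbb{E}[x_i]=0$, $\mathbb{E}[x_i x_i^{\dag}]=I$ and $\mathbb{E}[x_i x_i^T]=0$, the linear term dominates and one obtains $\widehat{x}_i=\sqrt{snr}\,(\mathbf{H_iP_i})^{\dag}\mathbf{y}+O(snr^{3/2})$. Taking $\mathbf{y}\sim\mathcal{CN}(0,I)$ to leading order and using $\mathbb{E}_y[\mathbf{y}\mathbf{y}^{\dag}]=I+O(snr)$, I would then get
\[
\mathbb{E}_y\bigl[\widehat{x}_1\widehat{x}_2^{\dag}\bigr]=snr\,(\mathbf{H_1P_1})^{\dag}(\mathbf{H_2P_2})+O(snr^2),
\]
and the conjugate expression for $\mathbb{E}_y[\widehat{x}_2\widehat{x}_1^{\dag}]$. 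Substituting into \eqref{psii} gives the two cross terms with $\mathbf{H_1P_1(H_1P_1)^{\dag}H_2P_2(H_2P_2)^{\dag}}$ and $\mathbf{H_2P_2(H_2P_2)^{\dag}H_1P_1(H_1P_1)^{\dag}}$, each scaling linearly in $snr$. Integrating from $0$ to $snr$ then produces the last two $snr^2$ terms that appear in the statement of Theorem~\ref{LOWSNR}.

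Finally, I would simply collect the constant, linear and quadratic pieces: the constant term of $mmse$ integrates to the linear trace terms in $snr$; the linear-in-$\gamma$ piece of $mmse$ integrates to the $-Tr\{(\mathbf{H_iP_i(H_iP_i)^{\dag}})^2\}snr^2$ contributions; the linear-in-$\gamma$ piece of $\psi$ integrates to the two $Tr\{\mathbf{H_1P_1(H_1P_1)^{\dag}H_2P_2(H_2P_2)^{\dag}}\}snr^2$ contributions with opposite signs; and all remaining contributions are $O(snr^3)$, justifying the remainder in \eqref{lowsnr}.

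The main obstacle I expect is the low-SNR expansion of the nonlinear estimators $\widehat{x}_i$ for arbitrary (non-Gaussian) inputs and the careful bookkeeping of cross moments in $\mathbb{E}_y[\widehat{x}_1\widehat{x}_2^{\dag}]$: one has to expand $p_{y\mid x_1,x_2}$, average over both $x_1$ and $x_2$ against their product prior, divide by $p_y(\mathbf{y})$ (itself expanded in $snr$), and then verify that the higher-order terms in $\sqrt{snr}$ (odd-moment contributions, as well as $\mathbb{E}[xx^T]$ pieces) vanish under the stated moment assumptions. Once those moment conditions are used to kill the spurious $\sqrt{snr}$ cross terms, the rest of the derivation is a direct integration.
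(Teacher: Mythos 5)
Your proposal follows essentially the same route as the paper's Appendix C: expand the conditional density and hence the posterior means $\widehat{x}_i$ in powers of $\sqrt{snr}$ under the stated moment assumptions, obtain $\mathbf{E}_i=\mathbf{I}-(\mathbf{H_iP_i})^{\dag}\mathbf{H_iP_i}\,snr+\mathcal{O}(snr^2)$ and the linear-in-$snr$ cross covariance $\psi$, then integrate the multiuser relation $dI/dsnr=mmse(snr)+\psi(snr)$ term by term. The steps and resulting coefficients match the paper's proof (which additionally remarks that the two cross trace terms cancel by cyclicity), so your argument is correct and not a distinct approach.
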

\normalsize
\begin{proof}
See Appendix C
\end{proof}
The wideband slope $-$ which indicates how fast the capacity is achieved in terms of required bandwidth $-$ is inversely proportional to the second order terms of the mutual information in the low-snr Taylor expansion \eqref{lowsnr}. Therefore, this term is a key low-power performance measure since the bandwidth required to sustain a given rate with a given low power, i.e., minimal energy per bit, is inversely proportional to this term \cite{93}. Further, its clear that the 5th and 6th term in \eqref{lowsnr} are due to $\psi(snr)$, to which they play a fundamental role in the rate losses encountered at the low-snr regime.
\section{Conclusions}
We generlize the fundamental relation between the mutual information and the MMSE, the I-MMSE identity in all its current forms, to a new fundamental relation, the Multiuser I-MMSE, which applies to multiuser vector and scalar channel setups. Further, we proved our generlization by deriving the relation for the joint mutual information, conditional and non-conditional mutual information. We capitalize on our unveiled generalized relation to find explicit closed forms of the mutual information and the MMSE of multiuser channels driven by BPSK inputs, and to derive the mutiuser I-MMSE at the regime of low SNR. Besides, the impact of the result is many fold. We mainly quantify the data rate losses due to the interference, which constitutes the gap from the cut-set upper bound. This gap $\int\psi(snr)snr$ in MAC channels as a special case of interference channels is characterized with respect to the users channels, precoding (power allocation) and the decoding or estimation process of inputs for the first time. In turn, we allow for future characterization of the capacity of interference channels. Additionaly, this result allows for future investigation and characterization of the network I-MMSE. On the other hand, the new fundamental relation will have high impact on future designs of transmission schemes that are interference-aware, due to the awareness of the covariance (gap) introduced due to the interference. It will also have impact on statistical signal processing applications that are based on classification of mixtures of data in a measurement system. 
\section{Appendix A: Proof of Theorem~\ref{theorem1.0}}
The conditional probability density for the two-user MAC can be written as follows:
\small
\begin{equation}
p_{y|x_1,x_2}({\bf{y|x_1, x_2}})=\frac{1}{\pi^{n_r}}e^{-\left\|y-\sqrt{snr}{\bf{{H_1P_1x_1}}}-\sqrt{snr}{\bf{H_2P_2x_2}}\right\|^{2}}
\end{equation}
\normalsize
Thus, the corresponding mutual information is:
\begin{equation}
I({\bf{x_1, x_2;y}})=\mathbb{E}\left[log \left(\frac{p_{y|x_1,x_2}({\bf{y|x_1, x_2}})}{p_{y}({\bf{y}})}\right)\right]
\end{equation}
\begin{equation}
I({\bf{x_1, x_2;y}})=-n_r{{log(\pi e)}} -\mathbb{E}\left[log\left(p_{y}({\bf{y}})\right)\right]
\end{equation}
\begin{equation}
I({\bf{x_1, x_2;y}})=-n_r{{log(\pi e)}} -  \int p_{y}({\bf{y}})log\left(p_{y}({\bf{y}})\right) d\bf{y}
\end{equation}
Then, the derivative of the mutual information with respect to the SNR is as follows:
\begin{equation}
\frac{d I({\bf{x_1, x_2;y}})}{d snr}=-\frac{\partial}{\partial snr} \int {p_{y}}({\bf{y}})log \left(p_{y}({\bf{y}})\right) d\bf{y}
\end{equation}
\begin{equation}
~~=-\int \left({p_{y}}({\bf{y}})\frac{1}{{p_{y}}({\bf{y}})}+log \left(p_{y}({\bf{y}})\right)\right)\frac{\partial {p_{y}}({\bf{y}})}{\partial snr} d\bf{y}
\end{equation}
\begin{equation}
\label{45.0}
~~=-\int \left(1+log\left(p_{y}({\bf{y}})\right)\right)\frac{\partial {p_{y}}({\bf{y}})}{\partial snr} d\bf{y}
\end{equation}
Where the probability density function of the received vector $\bf{y}$ is given by:
\begin{equation}
p_{y}({\bf{y}})=\sum_{{\bf{x_{1},x_{2}}}} p_{y|x_{1},x_{2}}({\bf{y|x_{1},x_{2}}})p_{x_1,x_2}({\bf{x_{1}}},{\bf{x_{2}}})
\end{equation}
\begin{equation}
~~=\mathbb{E}_{x_1,x_2}\left[{p_{y|x_{1},x_{2}}}({\bf{y|x_{1},x_{2}}})\right]
\end{equation}
The derivative of the conditional output with respect to the SNR can be written as:
\scriptsize
\begin{multline}
\frac{\partial {{p_{y|x_{1},x_{2}}}}({\bf{y|x_{1},x_{2}}})}{\partial snr}=\\
-{{p_{y|x_{1},x_{2}}}}({\bf{y|x_{1},x_{2}}})\frac{\partial }{\partial snr}\left({\bf{y}}-\sqrt{snr}{\bf{H_1P_1x_1}}-\sqrt{snr}{\bf{H_2P_2x_2}}\right)^{\dag} \times \\
\left({\bf{y}}-\sqrt{snr}{\bf{H_1P_1x_1}}-\sqrt{snr}{\bf{H_2P_2x_2}}\right)
\end{multline}
\vspace{-0.2cm}
\begin{multline}
~~=-\frac{1}{\sqrt{snr}}\left(({\bf{H_1P_1x_1}})^\dag + ({\bf{H_2P_2x_2}})^\dag \right) \times \nonumber\\
 \left({\bf{y}}-\sqrt{snr}{\bf{H_1P_1x_1}}-\sqrt{snr}{\bf{H_2P_2x_2}}\right) \times \nonumber\\
{p_{y|x_{1},x_{2}}}({\bf{y|x_{1},x_{2}}})
\end{multline}
\vspace{-0.1cm}
\begin{equation}
~~=-\frac{1}{\sqrt{snr}}\left(({\bf{H_1P_1x_1}})^\dag + ({\bf{H_2P_2x_2}})^\dag \right){\bf{\nabla_{y}}}{p_{y|x_{1},x_{2}}}({\bf{y|x_{1},x_{2}}})
\end{equation}
\normalsize
Therefore, we have:
\scriptsize
\begin{multline}
\label{46.0}
\mathbb{E}_{x_1,x_2}\left[{\nabla_{snr}}{p_{y|x_{1},x_{2}}}({\bf{y|x_{1},x_{2}}})\right]=\\
\mathbb{E}_{x_1,x_2}\left[-\frac{1}{\sqrt{snr}}\left(({\bf{H_1P_1x_1}})^\dag + ({\bf{H_2P_2x_2}})^\dag \right)\nabla_{\bf{y}}p_{y|x_{1},x_{2}}({\bf{y|x_{1},x_{2}}})\right]
\end{multline}
\normalsize
Substitute \eqref{46.0} into \eqref{45.0}, we get:
\scriptsize
\begin{multline}
\frac{d I({\bf{x_1, x_2;y}})}{d snr}=\frac{1}{\sqrt{snr}} \int \left(1+{log}\left({p_{y}}({\bf{y}})\right)\right) \times \\
\mathbb{E}_{x_1,x_2}[\left(({\bf{H_1P_1x_1}})^\dag+({\bf{H_2P_2x_2}})^\dag \right) \times \\
\bf{\nabla_{y}}\bf{p_{y|x_{1},x_{2}}}(\bf{y|x_{1},x_{2}})] d\bf{y}
\end{multline}
\vspace{-0.2cm}
\begin{multline}
~~=\frac{1}{\sqrt{snr}}\mathbb{E}_{x_1,x_2}[(\int \left(1+{log}\left({p_{y}}({\bf{y}})\right)\right)\left(({\bf{H_1P_1x_1}})^\dag+({\bf{H_2P_2x_2}})^\dag \right) \times \\
\nabla_{\bf{y}}{p_{y|x_{1},x_{2}}}({\bf{y|x_{1},x_{2}}})d\bf{y}) ]
\end{multline}
\normalsize
Using integration by parts applied to the real and imaginary parts of $\bf{y}$ we have:
\scriptsize
\begin{multline}
\label{47.0}
\int \left(1+{log}\left({p_{y}}({\bf{y}})\right)\right) \frac{\partial p_{y|x_{1},x_{2}}({\bf{y|x_{1},x_{2}}})}{\partial t} d{\bf{t}}= \\
\int \left(1+{log}\left({p_{y}}({\bf{y}})\right)\right) p_{y|x_{1},x_{2}}({\bf{y|x_{1},x_{2}}}){|}_{-\infty}^{\infty} \\
-\int_{-\infty}^{\infty} \frac{1}{p_{y}({\bf{y}})} \frac{\partial {p_{y}}({\bf{y}})}{\partial t}{p_{y|x_{1},x_{2}}}({\bf{y|x_{1},x_{2}}})d\bf{t}
\end{multline}
\normalsize
The first term in \eqref{47.0} goes to zero as $\left\|\bf{y}\right\| \rightarrow \infty$. Therefore,
\scriptsize
\begin{multline}
\frac{d I({\bf{x_1, x_2;y}})}{d snr}=\frac{1}{\sqrt{snr}}\mathbb{E}_{x_1,x_2}[-\int ( (({\bf{H_1P_1x_1}})^\dag +({\bf{H_2P_2x_2}})^\dag ) \times \\
\frac{p_{y|x_{1},x_{2}}({\bf{y|x_{1},x_{2}}})}{p_{y}({\bf{y}})}\nabla_{\bf{y}}p_{y}({\bf{y}})d\bf{y} ) ]
\end{multline}
\begin{multline}
\frac{d I({\bf{x_1, x_2;y}})}{d snr}=-\frac{1}{\sqrt{snr}}\int {\nabla_{\bf{y}}}{p_{y}({\bf{y}})} \times \\
\mathbb{E}_{x_1,x_2} [(({\bf{H_1P_1x_1}})^\dag+({\bf{H_2P_2x_2}})^\dag )
\frac{{p_{y|x_{1},x_{2}}}({\bf{y|x_{1},x_{2}}})}{{p_{y}({\bf{y}})}}]d\bf{y}
\end{multline}
\vspace{-0.2cm}
\begin{multline}
\label{48.0}
\frac{d I({\bf{x_1, x_2;y}})}{d snr}=-\frac{1}{\sqrt{snr}}\int \nabla_{\bf{y}}{{p_{y}({\bf{y}})}}\mathbb{E}_{x_1,x_2}[({\bf{H_1P_1}})^\dag \mathbb{E}_{x_1|y}\left[{\bf{x_1|y}}\right]^{\dag} \\
+({\bf{H_2P_2}})^\dag  \mathbb{E}_{x_2|y}\left[{\bf{x_2|y}}\right]^{\dag} ] d\bf{y}
\end{multline}
\normalsize
However,
\scriptsize
\begin{multline}
\label{49.0}
\nabla_{\bf{y}}{p_{y}(\bf{y})}=\nabla_{\bf{y}}\mathbb{E}_{x_1,x_2}\left[{p_{y|x_{1},x_{2}}}({\bf{y|x_{1},x_{2}}})\right] \\
=\mathbb{E}_{x_1,x_2}\left[\nabla_{\bf{y}}{p_{y|x_{1},x_{2}}}({\bf{y|x_{1},x_{2}}})\right] \\
=-\mathbb{E}_{x_1,x_2}\left[{p_{y|x_{1},x_{2}}}({\bf{y|x_{1},x_{2}}})\left({\bf{y}}-\sqrt{snr}{\bf{H_1P_1x_1}}-\sqrt{snr}{\bf{H_2P_2x_2}}\right)\right]\\
=-\mathbb{E}_{x_1,x_2}\left[{p_{y}}({\bf{y}})\left({\bf{y}}-\sqrt{snr}{\bf{H_1P_1x_1}}-\sqrt{snr}{\bf{H_2P_2x_2}}\right)|{\bf{y}}\right] \\
=-{p_{y}}({\bf{y}})\left({\bf{y}}-\sqrt{snr}{\bf{H_1P_1}}\mathbb{E}_{x_1|y}[{\bf{x_1|y}}]-\sqrt{snr}{\bf{H_2P_2}}\mathbb{E}_{x_2|y}[{\bf{x_2|y}}]\right)\\
\end{multline}
\normalsize
\vspace{-0.2cm}
Substitute \eqref{49.0} into \eqref{48.0} we get:
\scriptsize
\begin{multline}
\frac{d I({\bf{x_1, x_2;y}})}{d snr}=\frac{1}{\sqrt{snr}}\int p_{y}({\bf{y}})({\bf{y}}-\sqrt{snr}{\bf{H_1P_1}}\mathbb{E}_{x_1|y}[{\bf{x_1|y}}]\\
-\sqrt{snr}{\bf{H_2P_2}} \mathbb{E}_{x_2|y}[{\bf{x_2|y}}]) \times \nonumber \\
\mathbb{E}_{x_1,x_2}\left(({\bf{H_1P_1}})^\dag \mathbb{E}_{x_1|y}\left[{\bf{x_1|y}}\right]^{\dag}+({\bf{H_2P_2}})^\dag \mathbb{E}_{x_2|y}\left[{\bf{x_2|y}}\right]^{\dag} \right) d\bf{y}
\end{multline}
\vspace{-0.2cm}
\begin{multline}
\frac{d {\bf{I(x_1, x_2;y)}}}{d snr}=\frac{1}{{\sqrt{snr}}}\mathbb{E}_y[\bf{yx_1^{\dag}}]{\bf{(H_1P_1)^{\dag}}}+\frac{1}{\sqrt{snr}}\mathbb{E}_y[\bf{yx_2^{\dag}}]\bf{(H_2P_2)^{\dag}} \nonumber \\
-\mathbb{E}_y[{\bf{H_1P_1}}\mathbb{E}_{x_1|y}[{\bf{x_1|y}}]\mathbb{E}_{x_1|y}[{\bf{x_1|y}}]^{\dag}]{\bf{(H_1P_1)^{\dag}}} \nonumber \\
-\mathbb{E}_y[{\bf{H_1P_1}}\mathbb{E}_{x_1|y}[{\bf{x_1|y}}]\mathbb{E}_{x_2|y}[{\bf{x_2|y}}]^{\dag}]{\bf{(H_2P_2)^{\dag}}} \nonumber \\
-\mathbb{E}_y[{\bf{H_2P_2}}\mathbb{E}_{x_2|y}[{\bf{x_2|y}}]\mathbb{E}_{x_2|y}[{\bf{x_2|y}}]^{\dag}]{\bf{(H_2P_2)^{\dag}}} \nonumber \\
-\mathbb{E}_y[{\bf{H_2P_2}}\mathbb{E}_{x_2|y}[{\bf{x_2|y}}]\mathbb{E}_{x_1|y}[{\bf{x_1|y}}]^{\dag}]{\bf{(H_1P_1)^{\dag}}}
\end{multline}
\normalsize
Therefore,
\scriptsize
\begin{multline}
\frac{d I({\bf{x_1, x_2;y}})}{d snr}={\bf{H_1P_1}}\mathbb{E}_{x_1}[{\bf{x_1x_1^{\dag}}}]{\bf{(H_1P_1)^{\dag}}} \nonumber \\
-{\bf{H_1P_1}}\mathbb{E}_y[\mathbb{E}_{x_1|y}[{\bf{x_1|y}}]\mathbb{E}_{x_1|y}[{\bf{x_1|y}}]^{\dag}{\bf{(H_1P_1)^{\dag}}} \nonumber \\
+{\bf{H_2P_2}}\mathbb{E}_{x_2}[{\bf{x_2x_2^{\dag}}}]{\bf{(H_2P_2)^{\dag}}} \nonumber \\
-{\bf{H_2P_2}}\mathbb{E}_y[\mathbb{E}_{x_2|y}[{\bf{x_2|y}}]\mathbb{E}_{x_2|y}[{\bf{x_2|y}}]^{\dag}{\bf{(H_2P_2)^{\dag}}}  \nonumber \\
-{\bf{H_1P_1}}\mathbb{E}_y[\mathbb{E}_{x_1|y}[\bf{x_1|y}]\mathbb{E}_{x_2|y}[{\bf{x_2|y}}]^{\dag}]{\bf{(H_2P_2)^{\dag}}}\nonumber \\
-{\bf{H_2P_2}}\mathbb{E}_y[\mathbb{E}_{x_2|y}[{\bf{x_2|y}}]\mathbb{E}_{x_1|y}[{\bf{x_1|y}}]^{\dag}]{\bf{(H_1P_1)^{\dag}}}
\end{multline}
\vspace{-0.2cm}
\begin{multline}
\frac{d I({\bf{x_1, x_2;y}})}{d snr}={\bf{H_1P_1}}{\bf{E_1}}{\bf{(H_1P_1)^{\dag}}}+{\bf{H_2P_2}}{\bf{E_2}}{\bf{(H_2P_2)^{\dag}}} \nonumber \\
-{\bf{H_1P_1}}\mathbb{E}_y[\mathbb{E}_{x_1|y}[{\bf{x_1|y}}]\mathbb{E}_{x_2|y}[{\bf{x_2|y}}]^{\dag}]{\bf{(H_2P_2)^{\dag}}} \nonumber \\
-{\bf{H_2P_2}}\mathbb{E}_y[\mathbb{E}_{x_2|y}[{\bf{x_2|y}}]\mathbb{E}_{x_1|y}[{\bf{x_1|y}}]^{\dag}]{\bf{(H_1P_1)^{\dag}}}
\end{multline}
\normalsize
Therefore, the derivative of the mutual information with respect to the SNR and the per users mmse and input estimates (or covariances) is as follows:
\scriptsize
\begin{multline}
\frac{d I({\bf{x_1, x_2;y}})}{d snr}=mmse_1(snr)+mmse_2(snr)\\
-Tr\left\{{\bf{H_1P_1}}\mathbb{E}_{y}[{\bf{\widehat{x}_1{\widehat{x}_2}^{\dag}}}]{\bf{(H_2P_2)^{\dag}}}\right\}  \nonumber \\
-Tr\left\{{\bf{H_2P_2}}\mathbb{E}_y[{\bf{\widehat{x}_2{\widehat{x}_1}^{\dag}}}]{\bf{(H_1P_1)^{\dag}}}\right\}
\end{multline}
\normalsize
\vspace{-0.1cm}
Therefore, we can write the derivative of the derivative of the mutual information with respect to the snr as follows:
\vspace{-0.1cm}
\small
\begin{equation}
\frac{d I(snr)}{d snr}=mmse(snr)+\psi(snr)
\end{equation}
\normalsize
\vspace{-0.05cm}
Therefore, Theorem~\ref{theorem1.0} has been proved as a generalization of the one by Guo, Shamai, Verdu in \cite{34} to the multiuser case.
\section{Appendix B: Multiuser MMSE(snr) and I(snr) for BPSK Inputs}
\subsection{Part I} 
Consider the simplified case for a channel model given by:
\begin{equation}
\label{simplemodel}
{\bf{y}}=\sqrt{snr}~{\bf{x_1}}+\sqrt{snr}~\bf{x_2}+\bf{n},
\end{equation}
The total MMSE is given as:
\begin{equation}
mmse(snr)=mmse_1(snr)+mmse_2(snr)
\end{equation}
Therefore, we can write the non-linear MMSE matrix for each user respectively as:
\small
\begin{multline}
\label{5.5.0}
\bf{E_1}=\bf{\mathbb{E}[(\bf{x_1}-\bf{\mathbb{E}[x_1|y]})(\bf{x_1}-\bf{\mathbb{E}[x_1|y]})^{\dag}]}\\
=\mathbb{E}[\bf{x_1x_1^{\dag}}]-\mathbb{E}[\bf{\mathbb{E}[x_1|y]\mathbb{E}[x_1|y]}],
\end{multline}
\begin{multline}
\label{5.5.1}
\bf{E_2}=\bf{\mathbb{E}[(\bf{x_2}-\bf{\mathbb{E}[x_2|y]})(\bf{x_2}-\bf{\mathbb{E}[x_2|y]})^{\dag}]}\\
=\mathbb{E}[\bf{x_2x_2^{\dag}}]-\mathbb{E}[\bf{\mathbb{E}[x_2|y]\mathbb{E}[x_2|y]}],
\end{multline}
\normalsize
with,
\begin{equation}
\mathbb{E}[{\bf{x_1|y}}]=\frac{\sum_{{\bf{x_1,x_2}}}{\bf{x_1}}p_{y|x_1,x_2}({\bf{y|x_1, x_2}})p_{x_1}({\bf{x_1}})p_{x_2}({\bf{x_2}})}{p_{y}({\bf{y}})}
\end{equation}
\begin{equation}
\label{5.5.01}
~~= \frac{\sum_{{\bf{x_1,x_2}}}{\bf{x_1}}p_{y|x_1,x_2}({\bf{y|x_1,x_2}})p_{x_1}({\bf{x_1}})p_{x_2}({\bf{x_2}})}{\sum_{{\bf{x_1',x_2'}}}p_{{\bf{y|x_1',x_2'}}}({\bf{y|x_1',x_2'}})p_{x_1'}({\bf{x_1'}})p_{x_2'}({\bf{x_2'}})}
\end{equation}
\begin{equation}
\mathbb{E}[{\bf{x_2|y}}]=\frac{\sum_{{\bf{x_1,x_2}}}{\bf{x_2}}p_{y|x_1,x_2}({\bf{y|x_1,x_2}})p_{x_1}({\bf{x_1}})p_{x_2}({\bf{x_2}})}{p_{y}({\bf{y}})}
\end{equation}
\begin{equation} 
\label{5.5.02}
~~= \frac{\sum_{{\bf{x_1,x_2}}}{\bf{x_2}}p_{y|x_1,x_2}({\bf{y|x_1,x_2}})p_{x_1}({\bf{x_1}})p_{x_2}({\bf{x_2}})}{\sum_{{\bf{x_1',x_2'}}}p_{{\bf{y|x_1',x_2'}}}({\bf{y|x_1',x_2'}})p_{x_1'}({\bf{x_1'}})p_{x_2'}({\bf{x_2'}})}
\end{equation}
Its of particlar importance to notice that the conditioning over $x_2$ in $p_{y|x_1,x_2}({\bf{y|x_1, x_2}})$ inside $\mathbb{E}[{\bf{x_1|y}}]$ does not correspond to knowledge of the message $x_2$, but to considering it as noise in this setup. Therefore, we yet account for $p_{x_2}({\bf{x_2}})$. On the other hand, its worth to note also that the conditioning over $x_1$ in $p_{y|x_1,x_2}({\bf{y|x_1, x_2}})$ inside $\mathbb{E}[{\bf{x_2|y}}]$ is just for clarity. However, if $x_1$ is decoded first, then when decoding (estimating) $x_2$ next, we remove $x_1$. Therefore, $\mathbb{E}[{\bf{x_2|y}}]$ is in fact equal to $\mathbb{E}[{\bf{x_2|y-x_1}}]$, this will make $p_{y|x_1,x_2}({\bf{y|x_1, x_2}})$ equals to $p_{y|x_2}({\bf{y|x_2}})$ and $p_{x_1}({\bf{x_1}})$ will be absent accordingly from the equation. 
\normalsize

For the two user MAC driven by BPSK inputs, the values of ${\bf{x_1}}=\{1,-1\}$ and ${\bf{x_2}}=\{1,-1\}$. The non-linear estimates in \eqref{5.5.01} and \eqref{5.5.02} consider that both user inputs are decoded jointly. However, we are interested in successive decoding of the users inputs. 

Therefore, the non-linear estimate of user 1 decoded first in the MAC considering user 2 as noise that scales the SNR of user 1, and so we can write it with respect to all possible permutations of the possible inputs of user 1 as follows:

\vspace{-0.05cm}

\scriptsize
\begin{equation}
\label{5.5.011}
\mathbb{E}[{\bf{x_1|y}}]= \frac{\sum_{{\bf{x_1}}}{\bf{x_1}}p_{y|x_1,x_2}({\bf{y|x_1,x_2}})p_{x_1}({\bf{x_1}})p_{x_2}({\bf{x_2}})}{\sum_{{\bf{x_1'}}}p_{{\bf{y|x_1'}}}({\bf{y|x_1',x_2}})p_{x_1'}({\bf{x_1'}})p_{x_2}({\bf{x_2}})}
\end{equation}
\begin{equation}
\label{5.5.0110}
\mathbb{E}[{\bf{x_1|y}}]=\frac{e^{\frac{-({\bf{y}}-\sqrt{snr})^{2}}{4}}-e^{\frac{-({\bf{y}}+\sqrt{snr})^{2}}{4}}}{e^{\frac{-({\bf{y}}-\sqrt{snr})^{2}}{4}}+e^{\frac{-({\bf{y}}+\sqrt{snr})^{2}}{4}}}
\end{equation}
\normalsize
However,
\scriptsize
\begin{multline}
\mathbb{E}\left[\mathbb{E}({\bf{x_1|y}}){\mathbb{E}({\bf{x_1|y}})^{\dag}}\right] \\
=\int{\left(\frac{\sum_{{\bf{x_1}}}{{\bf{x_1}}}p_{y|x_1,x_2}({\bf{y|x_1,x_2}})p_{x_1}({\bf{x_1}})p_{x_2}({\bf{x_2}})}{p_{y}({\bf{y}})}\right)^{2}p_{y}({\bf{y}}) \bf{dy}}
\end{multline}
\begin{multline}
\label{5511}
\mathbb{E}[\mathbb{E}({\bf{x_1|y}})\mathbb{E}({\bf{x_1|y}})^{\dag}]=\\
\frac{1}{8\pi}\int{\frac{e^{\frac{-({\bf{y}}-\sqrt{snr})^{2}}{4}}-e^{\frac{-({\bf{y}}+\sqrt{snr})^{2}}{4}}}{e^{\frac{-({\bf{y}}-\sqrt{snr})^{2}}{4}}+e^{\frac{-({\bf{y}}+\sqrt{snr})^{2}}{4}}}(e^{\frac{-({\bf{y}}-\sqrt{snr})^{2}}{4}}-e^{\frac{-({\bf{y}}+\sqrt{snr})^{2}}{4}})\bf{dy}}
\end{multline}
\normalsize
\vspace{-0.1cm}
Digging into the depth of the right hand side of equation~\eqref{5511}, we have:
\begin{equation}
({\bf{y}}-\sqrt{snr})^{2}={\bf{y}^2}-{2\sqrt{snr}{\bf{y}}}+{snr}
\end{equation}
\begin{equation}
({\bf{y}}+\sqrt{snr})^{2}={\bf{y^2}}+{2\sqrt{snr}{\bf{y}}}+{snr}
\end{equation}
Thus,
\begin{equation}
\frac{e^{\frac{-({\bf{y}}-\sqrt{snr})^{2}}{4}}-e^{\frac{-({\bf{y}}+\sqrt{snr})^{2}}{4}}}{e^{\frac{-({\bf{y}}-\sqrt{snr})^{2}}{4}}+e^{\frac{-({\bf{y}}+\sqrt{snr})^{2}}{4}}}
=\frac{e^{\frac{\sqrt{snr}}{2}{\bf{y}}}-e^{-\frac{\sqrt{snr}}{2}{\bf{y}}}}{e^{{\frac{\sqrt{snr}}{2}{\bf{y}}}}+e^{-{\frac{\sqrt{snr}}{2}{\bf{y}}}}}
\end{equation}
\begin{equation}
~~=tanh\left(\frac{\sqrt{snr}}{2}\mathcal{R}(\bf{y})\right)
\end{equation}
It follows that:
\scriptsize
\begin{multline}
\mathbb{E}\left[\bf{\mathbb{E}({\bf{x_1|y}})}\bf{\mathbb{E}({\bf{x_1|y}})^{\dag}}\right] \\
=\frac{1}{8\pi}\int{tanh\left(\frac{\sqrt{snr}}{2}\mathcal{R}({\bf{y}})\right)\left(e^{\frac{-({\bf{y}}-\sqrt{snr})^{2}}{4}}-e^{\frac{-({\bf{y}}+\sqrt{snr})^{2}}{4}}\right)d\bf{y}}
\end{multline}
\normalsize
Therefore,
\scriptsize
\begin{multline}
\mathbb{E}\left[\bf{\mathbb{E}({\bf{x_1|y}})}{\mathbb{E}({\bf{x_1|y}})^{\dag}}\right]\\
=\frac{1}{8\pi}\int_{y\in \textsl{C}}{tanh\left(\frac{\sqrt{{snr}}}{2}{\mathcal{R}(y)}\right)e^{\frac{-\left({\bf{y}}-{\sqrt{snr}}\right)^{2}}{4}}}\\
-\frac{1}{8\pi}\int_{y\in \textsl{C}}{tanh\left(\frac{\sqrt{{snr}}}{2}{\mathcal{R}(y)}\right)e^{\frac{-\left({\bf{y}}+{\sqrt{snr}}\right)^{2}}{4}}d\bf{y}}
\end{multline}
\normalsize
However, it is known that:
\begin{equation}
tanh(-x)=-tanh(x)
\end{equation}
and the expectation remains the same if ${\bf{y}}\sim$ $\mathcal{N}$ $(\sqrt{snr},1)$ replaced by ${\bf{y}}\sim$ $\mathcal{N}$ $(-\sqrt{snr},1)$, due to symmetry, therefore, we have:
\scriptsize
\begin{multline}
\label{5.5.3}
\mathbb{E}\left[\bf{\mathbb{E}({\bf{x_1|y}})}\bf{\mathbb{E}({\bf{x_1|y}})^{\dag}}\right]\\
=\frac{1}{4\pi}\int_{y\in \textsl{C}}{tanh\left(\frac{\sqrt{{snr}}}{2}{\bf{y}}\right)e^{\frac{-\left({\bf{y}}-\sqrt{snr}\right)^{2}}{4}}\bf{dy}}
\end{multline}
\normalsize
Therefore, due to marginalization of the complex domain into the real domain, substituting $\mathbb{E}[{\bf{x_1x_1^{\dag}}}]=1$ into \eqref{5.5.0} the scaled MMSE of user 1 over a MAC channel with BPSK inputs is given by:
\vspace{-0.1cm}
\scriptsize
\begin{multline}
\label{5.5.4}
mmse_1'(snr)=1-\frac{1}{4\sqrt{\pi}}\int_{y \in \textsl{R}}tanh\left(\frac{\sqrt{{snr}}}{2}{\bf{y}}\right) \times \nonumber \\
e^{\frac{-\left({\bf{y}}-\sqrt{snr}\right)^{2}}{4}}d{\bf{y}}
\end{multline}
\normalsize
Therefore, Theorem~\ref{thm6} has been proved.
\subsection{Part II}
Due to the relation between the MMSE and the mutual information for SISO channels, the mutual information for user 1 decoded first and with a $2\sigma^2$ scaled $snr$ is given by:
\vspace{-0.15cm}
\scriptsize
\begin{equation}
I_1'(snr)=\frac{snr}{4}-\frac{1}{4\sqrt{\pi}}\int_{y \in \textsl{R}}log~cosh\left(\frac{\sqrt{{snr}}}{2}{\bf{y}}\right)e^{\frac{-\left({\bf{y}}-\sqrt{snr}\right)^{2}}{4}} d{\bf{y}}
\end{equation}
\normalsize
Where $\sigma^2=2$ is the sum of the noise and interference power variance. Following similar steps to the ones above, user 2 will be decoded next given (conditioned) on the knowledge of user 1 who is decoded first, therefore, the non-linear estimate of user 2 message removing user 1 message is:
\begin{equation}
\label{5.5.01122}
\mathbb{E}[{\bf{x_2|y}}]= \frac{\sum_{{\bf{x_2}}}{\bf{x_2}}p_{y|x_2}({\bf{y|x_2}})p_{x_2}({\bf{x_2}})}{\sum_{{\bf{x_2'}}}p_{{\bf{y|x_2'}}}({\bf{y|x_2}})p_{x_2}({\bf{x_2}})}
\end{equation}
\begin{equation}
\label{5.5.011022}
\mathbb{E}[{\bf{x_2|y}}]=\frac{e^{\frac{-({\bf{y}}-\sqrt{snr})^{2}}{2}}-e^{\frac{-({\bf{y}}+\sqrt{snr})^{2}}{2}}}{e^{\frac{-({\bf{y}}-\sqrt{snr})^{2}}{2}}+e^{\frac{-({\bf{y}}+\sqrt{snr})^{2}}{2}}}
\end{equation}
Following the same steps as before, and capitalizing on the new unvelied relation, the MMSE of user 2 will be given by,
\scriptsize
\begin{multline}
\label{5.5.41}
mmse_2(snr)=1-\frac{1}{\sqrt{2\pi}}\int_{y \in \textsl{R}}tanh\left(\sqrt{{snr}}{\bf{y}}\right)e^{\frac{-\left({\bf{y}}-\sqrt{snr}\right)^{2}}{2}}d{\bf{y}}
\end{multline}
\normalsize
and the mutual information for user 2 decoded next is given by:
\vspace{-0.15cm}
\scriptsize
\begin{multline}
I_2'(snr)={snr}-\frac{1}{\sqrt{2\pi}}\int_{y \in \textsl{R}}log~cosh\left(\sqrt{{snr}}{\bf{y}}\right)e^{\frac{-\left({\bf{y}}-\sqrt{snr}\right)^{2}}{2}} d{\bf{y}}
\end{multline}
\normalsize
Notice that due to the new fundamental relation between mmse(snr) and the mutual information, we can observe the effect of the covariance terms $\psi(snr)$, given as;
\vspace{-0.15cm}
\scriptsize
\begin{multline}
\psi(snr)=-\mathbb{E}_y[\bf{\mathbb{E}_{x_1|y}[\bf{x_1|y}]\mathbb{E}_{x_2|y}[\bf{x_2|y}]^{\dag}}] \\
 -\mathbb{E}_y[\bf{\mathbb{E}_{x_2|y}[\bf{x_2|y}]\mathbb{E}_{x_1|y}[\bf{x_1|y}]^{\dag}}]
\end{multline}
\normalsize

Following similar steps to the ones before, we can see that the covariance term will have a negative value which explains the loss in the mutual information in $I_1'(snr)$, and correspondingly to $I(snr)$. Therefore, the covariances of such setup are given as:
\vspace{-0.15cm}
\small
\begin{multline}
\mathbb{E}_y[{\mathbb{E}_{x_1|y}[{\bf{x_1|y}}]\mathbb{E}_{x_2|y}[{\bf{x_2|y}}]^{\dag}}] \\
=\frac{1}{\sqrt{2\pi}}\int_{y \in \textsl{R}}tanh\left(\frac{\sqrt{{snr}}}{2}{\bf{y}}\right)e^{\frac{-\left({\bf{y}}-\sqrt{snr}\right)^{2}}{2}}d{\bf{y}}
\end{multline}
\normalsize
and,
\vspace{-0.15cm}
\small
\begin{multline}
\mathbb{E}_y[{\mathbb{E}_{x_2|y}[{\bf{x_2|y}}]\mathbb{E}_{x_1|y}[{\bf{x_1|y}}]^{\dag}}] \\
=\frac{1}{4\sqrt{\pi}}\int_{y \in \textsl{R}}tanh\left(\sqrt{{snr}}{\bf{y}}\right)e^{\frac{-\left({\bf{y}}-\sqrt{snr}\right)^{2}}{4}}d{\bf{y}}
\end{multline}
\normalsize
Both terms are not equal, which can be explained by $p_y({\bf{y}})$ that is different in the integration based on who is decoded first. 

The new fundamental relation between the mutual information and the MMSE plus covariance states that,
\begin{equation}
\label{nfr}
\frac{dI(snr)}{dsnr}=mmse_1(snr)+mmse_2+\psi(snr)
\end{equation}
However, we derive the mutual information $I_1'(snr)$ based on the following:
\begin{equation}
\frac{dI_1'(snr)}{dsnr}=mmse_1'(snr)
\end{equation}
and the mutual information $I_2'(snr)$ was derived based on the following:
\begin{equation}
\frac{dI_2'(snr)}{dsnr}=mmse_2(snr)
\end{equation}
It follows from \eqref{nfr} that,
\begin{equation}
\frac{dI_2(snr)}{dsnr}=mmse_2(snr)+\psi(snr)
\end{equation}
Therefore,
\begin{equation}
\frac{dI_2(snr)}{dsnr}-\frac{dI_2'(snr)}{dsnr}=\psi(snr)
\end{equation}
This means that:
\begin{multline}
mmse(snr)=mmse_1(snr)+mmse_2(snr)\\
=mmse_1'(snr)+mmse_2(snr)\\
=mmse_1'(snr)+mmse_2'(snr)-\psi(snr)
\end{multline}
and,
\begin{equation}
I(snr)=I_1'(snr)+I_2'(snr)
\end{equation}
\begin{equation}
I(snr)=I_1'(snr)+I_2(snr)-\int \psi(snr) dsnr
\end{equation}
Moreover, due to the reasons discussed earlier, or when both inputs are decoded jointly, those covariance terms in $\psi(snr)$ might collapse to zero and so the mutual information will be the sum of the integral of both users MMSEs $mmse_1(snr)+mmse_2(snr)$. Therefore, Theorem~\ref{thm7} has been proved with its following corrollaries.
\section{Appendix C: Proof of Theorem~\ref{LOWSNR}}
First we will find the low-snr expansion of the MMSE matrix ${\bf{E_1}}+{\bf{E_2}}$ for user 1 and user 2 in~\eqref{eq.15} as $snr\rightarrow 0$. Therefore, we will first derive the low-snr expansion of the conditional probability exponent given as:
\small
\begin{multline}
|{\bf{y}}-\sqrt{snr}{\bf{H_1P_1x_1}}-\sqrt{snr}{\bf{H_2P_2x_2}}|^2 \nonumber\\
=\left({\bf{y}}-\sqrt{snr}{\bf{H_1P_1x_1}}-\sqrt{snr}{\bf{H_2P_2x_2}}\right)^{\dag} \times \nonumber\\
\left({\bf{y}}-\sqrt{snr}{\bf{H_1P_1x_1}}-\sqrt{snr}{\bf{H_2P_2x_2}}\right) \nonumber\\
=|{\bf{y}}|^2-\sqrt{snr}\left({\bf{y}^{\dag}}{\bf{H_1P_1x_1}}+\left({\bf{y^{\dag}}}{\bf{H_1P_1x_1}}\right)^{\dag}\right) \nonumber\\
+snr({\bf{H_1P_1x_1}})^{\dag}{\bf{H_1P_1x_1}} \nonumber\\
+snr({\bf{H_1P_1x_1}})^{\dag}{\bf{H_2P_2x_2}} \nonumber\\
-\sqrt{snr}\left({\bf{y}^{\dag}}{\bf{H_2P_2x_2}}+\left({\bf{y^{\dag}}}{\bf{H_2P_2x_2}}\right)^{\dag}\right) \nonumber\\
+snr({\bf{H_2P_2x_2}})^{\dag}{\bf{H_2P_2x_2}}+snr({\bf{H_2P_2x_2}})^{\dag}{\bf{H_1P_1x_1}} \nonumber\\
=|{\bf{y}}|^2-2 \sqrt{snr}\mathcal{R}\left({\bf{y^{\dag}}}{\bf{H_1P_1x_1}}\right) \nonumber\\
+ snr|{\bf{H_1P_1x_1}}|^2 +snr({\bf{H_1P_1x_1}})^{\dag}{\bf{H_2P_2x_2}} \nonumber\\
-2 \sqrt{snr}\mathcal{R}\left({\bf{y^{\dag}}}{\bf{H_2P_2x_2}}\right) \nonumber\\
+ snr|{\bf{H_2P_2x_2}}|^2 \nonumber\\
+snr({\bf{H_2P_2x_2}})^{\dag}{\bf{H_1P_1x_1}} \nonumber\\
\end{multline}
\normalsize
Hence,
\footnotesize
\begin{multline}
p_{y|x_1,x_2}\left({\bf{y|x_1,x_2}}\right)= \nonumber \\
\frac{1}{\pi^{n_r}}\exp{\left(-|{\bf{y}}-\sqrt{snr}{\bf{H_1P_1x_1}}-\sqrt{snr}{\bf{H_2P_2x_2}}|^2\right)}
\end{multline}
\footnotesize
\begin{multline}
~~=\frac{1}{\pi^{n_r}}e^{(-|{\bf{y}}|^2)} \times \nonumber\\
\frac{e^{\left(2{\sqrt{snr}}\mathcal{R}({\bf{y^{\dag}H_1P_1x_1}})+2{\sqrt{snr}}\mathcal{R}({\bf{y^{\dag}H_2P_2x_2}})\right)}}
{e^{({snr}\sum\limits_{i=1}^2|{\bf{H_iP_ix_i}}|^2+snr({\bf{H_2P_2x_2}})^{\dag}{\bf{H_1P_1x_1}}+snr({\bf{H_1P_1x_1}})^{\dag}{\bf{H_2P_2x_2}})}}
\end{multline}
\normalsize
However, due to:
\begin{equation}
\label{lab}
\exp{\left(a{\sqrt{snr}}-b{snr}\right)}=\frac{\exp{\left(a{\sqrt{snr}}\right)}}{\exp{\left(b{snr}\right)}} 
\end{equation}
The Taylor expansion of the numerator and the denominator of \eqref{lab} as $snr\rightarrow 0$, is given as,
\begin{multline}
\exp{\left(a{\sqrt{snr}}-b {snr}\right)}=\frac{1+a{\sqrt{snr}}+\mathcal{O}\left({snr}\right)}{1+\mathcal{O}\left({snr}\right)} \nonumber\\
=1+a{\sqrt{snr}}+\mathcal{O}\left({snr}\right)
\end{multline}
Therefore, the low-snr expansion of the conditional probability distribution of the Gaussian noise is defined as:
\small
\begin{multline}
\label{equationQ}
{p}_{y|x_1,x_2}\left({\bf{y|x_1,x_2}}\right)=  \\
\frac{1}{\pi^{n_r}}\exp{\left(-|\bf{y}|^2\right)} \times  \\
(1+2{\sqrt{snr}}\mathcal{R}\left({\bf{y^{\dag}H_1P_1x_1}}\right)+2{\sqrt{snr}}\mathcal{R}\left({\bf{y^{\dag}H_2P_2x_2}}\right) \\
+\mathcal{O}\left({snr}\right))
\end{multline}
\normalsize
\subsection{Derivation of the Multiuser MMSE at the Low SNR}
\vspace{-0.1cm}
The first term of the MMSE matrix of user 1 $\bf{E_1}$ is $\mathbb{E}[\bf{x_1x_1}^{\dag}]=I$. However, to find the second term of $\bf{E_1}$, $\mathbb{E}[{\bf{x_1|y}}]$ is defined as:
\vspace{-0.1cm}
\begin{equation}
\label{nonlinearestimate1}
\mathbb{E}[{\bf{x_1|y}}]=\frac{\sum_{{\bf{x_1,x_2}}}{\bf{x_1}}p_{y|x_1,x_2}({\bf{y|x_1, x_2}})p_{x_1}({\bf{x_1}})p_{x_2}({\bf{x_2}})}{p_{y}({\bf{y}})}
\end{equation}
We need to substitute \eqref{equationQ} into \eqref{nonlinearestimate1} as follows:
\vspace{-0.1cm}
\footnotesize
\begin{multline}
\mathbb{E}_y[\mathbb{E}_{x_1|y}[x_1|y]\left(\mathbb{E}_{x_1|y}[x_1|y]\right)^{\dag}] \nonumber \\
=\int_{y\in\mathbb{C}^{n_r}}\frac{1}{\pi^{n_r}}\exp{\left(-|y|^2\right)}p_{x_1}\left({\bf{x_1}}\right)p_{x_2}\left({\bf{x_2}}\right) \times \nonumber \\
\frac{\sum_{x_1,x_2} {\bf{x_1}}\left(1+\sum\limits_{i=1}^2 2{\sqrt{snr}}\mathcal{R}\left({\bf{y^{\dag}H_iP_ix_i}}\right)+\mathcal{O}\left(snr\right)\right)}{\sum_{x_1',x_2'}\left(1+\sum\limits_{i=1}^2 2{\sqrt{snr}}\mathcal{R}\left({\bf{y^{\dag}H_iP_ix_i'}}\right)+\mathcal{O}\left(snr\right)\right)} \times \\
(\sum_{x_1,x_2} {\bf{x_1}}\left(1+\sum\limits_{i=1}^2 2{\sqrt{snr}}\mathcal{R}\left({\bf{y^{\dag}H_iP_ix_i}}\right)+\mathcal{O}\left(snr\right)\right))^{\dag}d{\bf{y}}  \normalsize
\end{multline}
\normalsize
Recall that $\mathbb{E}[x_1]=\sum_{x_1} {\bf{x_1}}p_{x_1}({\bf{x_1}})=0$, $\mathbb{E}_{x_1,x_2}[{{\bf{x_1x_1}}^T}]=0$, $\mathbb{E}_{x_1,x_2}[{{\bf{x_1x_2}}^T}]=0$, $\mathbb{E}_{x_1,x_2}[{{\bf{x_2x_1}}^T}]=0$,   
$\mathbb{E}_{x_1,x_2}[{\bf{x_1x_1^\dag}}]={\bf{I}}$, and $\mathbb{E}_{x_1,x_2}[{\bf{x_2x_2^\dag}}]={\bf{I}}$. 

Therefore, the numerator of \eqref{nonlinearestimate1} is given by, 
\vspace{-0.15cm}

\scriptsize
\begin{multline}
\label{denominator}
\exp{(-|{\bf{y}}|^2)} \times \nonumber \\
(\sum_{x_1,x_2} {\bf{x_1}} (1+2\sqrt{snr}\mathcal{R}({\bf{y^{\dag}}} {\bf{H_1P_1x_1}}))p_{x_1}({\bf{x_1}})p_{x_2}({\bf{x_2}}) \nonumber \\
+2\sqrt{snr}\mathcal{R}({\bf{y^{\dag}}} {\bf{H_2P_2x_2}})+\mathcal{O}(snr))p_{x_1}({\bf{x_1}})p_{x_2}({\bf{x_2}}) \nonumber \\
=\exp{(-|{\bf{y}}|^2)} \times \nonumber \\
\mathbb{E}_{x_1,x_2}[{\bf{x_1}}] 
+\sqrt{snr}\mathbb{E}_{x_1,x_2}[{\bf{x_1}}]({\bf{y^{\dag}}} {\bf{H_1P_1x_1}}+({\bf{y^{\dag}}} {\bf{H_1P_1x_1}})^{\dag}) \nonumber \\
+\sqrt{snr}\mathbb{E}_{x_1,x_2}[{\bf{x_1}}]({\bf{y^{\dag}}} {\bf{H_2P_2x_2}}+({\bf{y^{\dag}}} {\bf{H_2P_2x_2}})^{\dag})+\mathcal{O}(snr)  \nonumber \\
=\exp{(-|{\bf{y}}|^2)} \times \nonumber \\
\mathbb{E}_{x_1,x_2}[{\bf{x_1}}] \nonumber \\
+\sqrt{snr}\mathbb{E}_{x_1,x_2}[{\bf{x_1}}]( ({{\bf{x_1}}^T{\bf{P_1}}^T{\bf{H_1}}^T{\bf{y}}^{*}})^T +({\bf{H_1P_1x_1}})^{\dag}{\bf{y}} ) \nonumber \\
+\sqrt{snr}\mathbb{E}_{x_1,x_2}[{\bf{x_1}}]( ({{\bf{x_2}}^T{\bf{P_2}}^T{\bf{H_2}}^T{\bf{y}}^{*}})^T +({\bf{H_2P_2x_2}})^{\dag}{\bf{y}} ) \nonumber \\
+\mathcal{O}(snr)  \nonumber \\
=\exp{(-|{\bf{y}}|^2)} \times \nonumber \\
\sqrt{snr}(\mathbb{E}_{x_1,x_2}[{\bf{x_1x_1^\dag}}]({\bf{H_1P_1}})^{\dag} {\bf{y}})+\mathcal{O}(snr) \nonumber \\
=\exp{(-|{\bf{y}}|^2)}\sqrt{snr}({\bf{H_1P_1}})^{\dag} {\bf{y}}+\mathcal{O}(snr)
\end{multline}
\normalsize
and the denominator of \eqref{nonlinearestimate1} is given by, 
\vspace{-0.1cm}
\scriptsize
\begin{multline}
\exp{(-|{\bf{y}}|^2)} \times \nonumber \\
(\sum_{x_1,x_2}(1+2\sqrt{snr}\mathcal{R}({\bf{y^{\dag}}} {\bf{H_1P_1x_1}}) \nonumber \\
+2\sqrt{snr}\mathcal{R}({\bf{y^{\dag}}} {\bf{H_2P_2x_2}}) \nonumber \\
+\mathcal{O}(snr))p_{x_1}({\bf{x_1}})p_{x_2}({\bf{x_2}})) \nonumber \\
=\exp{(-|{\bf{y}}|^2)} \times \nonumber \\
(1+\sqrt{snr}\sum_{x_1,x_2}({\bf{y}}^{\dag} {\bf{H_1P_1}x_1'} \nonumber \\
+({\bf{y}^{\dag}}{\bf{H_1P_1}x_1'})^{\dag})p_{x_1}({\bf{x_1}})p_{x_2}({\bf{x_2}})  \nonumber \\
+\sqrt{snr}\sum_{x_1,x_2}({\bf{y}^{\dag}} {\bf{H_2P_2}x_2'}+({\bf{y}}^{\dag} {\bf{H_2P_2}x_2'})^{\dag})+\mathcal{O}(snr)) \nonumber\\
=\exp{(-|{\bf{y}}|^2)}(1+ \nonumber \\
\sqrt{snr}({\bf{y}^{\dag}} \bf{H_1P_1}\mathbb{E}_{x_1}[{\bf{x_1}}]+\mathbb{E}_{x_1}[{\bf{x_1}^{\dag}}]({\bf{y}^{\dag}} {\bf{H_1P_1}})^{\dag}) \nonumber \\
\sqrt{snr}({\bf{y}^{\dag}} \bf{H_2P_2}\mathbb{E}_{x_2}[{\bf{x_2}}]+\mathbb{E}_{x_2}[{\bf{x_2}^{\dag}}]({\bf{y}^{\dag}} {\bf{H_2P_2}})^{\dag})+\mathcal{O}(snr)) \nonumber\\
=\exp{(-|{\bf{y}}|^2)}(1+\mathcal{O}(snr))
\end{multline}
\normalsize

Therefore,

\vspace{-0.1cm}
\footnotesize
\begin{multline}
\mathbb{E}_y[\mathbb{E}_{x_1|y}[{\bf{x_1|y}}]\left(\mathbb{E}_{x_1|y}[{\bf{x_1|y}}]\right)^{\dag}]=\\
\int_{{\bf{y}}\in\mathbb{C}^{n_r}}\frac{1}{\pi^{n_r}}\exp{\left(-|{\bf{y}}|^2\right)}\times \\ 
\quad \frac{\left(\sqrt{snr}\left({\bf{H_1P_1}}\right)^{\dag} {\bf{y}}+\mathcal{O}\left(snr\right)\right)\left(\sqrt{snr}\left({\bf{H_1P_1}}\right)^{\dag} {{\bf{y}}}+\mathcal{O}\left(snr\right)\right)^{\dag}}{1+\mathcal{O}\left(snr\right)}d{\bf{y}}\\ 
\end{multline}
\vspace{-0.35cm}
\begin{multline}
=\int_{{\bf{y}}\in\mathbb{C}^{n_r}}\frac{1}{\pi^{n_r}}\exp{\left(-|{\bf{y}}|^2\right)}\\
\quad\times\frac{snr\left({\bf{H_1P_1}}\right)^{\dag} {\bf{yy}^{\dag}} {\bf{H_1P_1}}+\mathcal{O}\left(snr^2\right)}{1+\mathcal{O}\left(snr\right)}d{\bf{y}}\\
\end{multline}
\normalsize
It follows that:
\vspace{-0.1cm}
\small
\begin{multline}
\mathbb{E}_y[\mathbb{E}_{x_1|y}[{\bf{x_1|y}}]\left(\mathbb{E}_{x_1|y}[{\bf{x_1|y}}]\right)^{\dag}] \\
=\left({\bf{H_1P_1}}\right)^{\dag} {\bf{H_1P_1}}snr+\mathcal{O}\left({snr}^2\right)
\end{multline}
\normalsize
Consequently, the low-snr expansion of the MMSE matrix of user 1 $\bf{E_1}$ is given as follows:
\begin{equation}
{\bf{E_1}}={\bf{I}}-({\bf{HP}})^{\dag}{\bf{H_1P_1}}.snr+\mathcal{O}(snr^{2})
\end{equation}
Similarly, the low-snr expansion of the MMSE matrix of user 2 $\bf{E_2}$ is given as follows:
\begin{equation}
{\bf{E_2}}={\bf{I}}-({\bf{H_2P_2}})^{\dag}{\bf{H_2P_2}}.snr+\mathcal{O}(snr^{2})
\end{equation}
Therefore, we can express the low-snr expansion of the total MMSE in terms of the snr as follows:
\small
\begin{multline}
MMSE\left(snr\right)=Tr\left\{{\bf{H_1P_1E_1}}\left({\bf{H_1P_1}}\right)^{\dag}\right\} \nonumber \\
+Tr\left\{{\bf{H_2P_2E_2}}\left({\bf{H_2P_2}}\right)^{\dag}\right\} \nonumber\\
=Tr\left\{{\bf{H_1P_1}}\left({\bf{I}}-\left({\bf{H_1P_1}}\right)^{\dag} {\bf{H_1P_1}}snr+\mathcal{O}\left(snr^2\right)\right)\left({\bf{H_1P_1}}\right)^{\dag}\right\} \nonumber \\
+Tr\left\{{\bf{H_2P_2}}\left({\bf{I}}-\left({\bf{H_2P_2}}\right)^{\dag} {\bf{H_2P_2}}snr+\mathcal{O}\left(snr^2\right)\right)\left({\bf{H_2P_2}}\right)^{\dag}\right\} \nonumber\\
=Tr\left\{{\bf{H_1P_1}}\left({\bf{H_1P_1}}\right)^{\dag}\right\}-Tr\left\{\left({\bf{H_1P_1}}\left({\bf{H_1P_1}}\right)^{\dag}\right)^2\right \} snr \nonumber \\
+Tr\left\{{\bf{H_2P_2}}\left({\bf{H_2P_2}}\right)^{\dag}\right\}-Tr\left\{\left({\bf{H_2P_2}}\left({\bf{H_2P_2}}\right)^{\dag}\right)^2\right \} snr \nonumber \\
+\mathcal{O}\left(snr^2\right)
\end{multline}
\normalsize
\vspace{-0.3cm}
\subsection{Derivation of the Multiuser Mutual Information at the Low SNR}
We shall now capitalize on the unveiled generalization of the fundamental relation between the mutual information and the MMSE plus covariance. Therefore, using similar steps to derive the low-snr expansion of the covariance given by,
\vspace{-0.1cm}
\small
\begin{multline}
\psi(snr)=\\
-Tr\left\{\bf{H_1P_1}\mathbb{E}_y[\bf{\mathbb{E}_{x_1|y}[\bf{x_1|y}]\mathbb{E}_{x_2|y}[\bf{x_2|y}]^{\dag}}]\bf{(H_2P_2)^{\dag}}\right\} \nonumber \\
-Tr\left\{\bf{H_2P_2}\mathbb{E}_y[\bf{\mathbb{E}_{x_2|y}[\bf{x_2|y}]\mathbb{E}_{x_1|y}[\bf{x_1|y}]^{\dag}}]\bf{(H_1P_1)^{\dag}}\right\},
\end{multline}
\normalsize
\vspace{-0.1cm}
Substituting the low-snr expansion of $\mathbb{E}_{x_1|y}[{\bf{x_1|y}}]$ and $\mathbb{E}_{x_2|y}[{\bf{x_2|y}}]$ into the covariance, the low-snr expansion of covariance $\psi(snr)$ as $snr \to 0$ is given by:
\vspace{-0.1cm}
\begin{multline}
\psi(snr)=\\
-Tr\left\{{\bf{H_1P_1}}\left({\bf{H_1P_1}}\right)^{\dag}{\bf{H_2P_2}}\left({\bf{H_2P_2}}\right)^{\dag}\right\}snr \nonumber \\
-Tr\left\{{\bf{H_2P_2}}\left({\bf{H_2P_2}}\right)^{\dag}{\bf{H_1P_1}}\left({\bf{H_1P_1}}\right)^{\dag}\right\}snr
\end{multline}
Therefore, capitalizing on the fundamental relation which states that,
\begin{equation}
\label{nfr00}
\frac{dI(snr)}{dsnr}=mmse(snr)+\psi(snr)
\end{equation}
The mutliuser mutual information at the low snr regime is the integral of both sides of \eqref{nfr00}, and so its given by:
\small
\begin{multline}
I(snr)=Tr \left\{\bf{H_1P_1}{(\bf{H_1P_1})}^{\dag}\right\}snr \\
+Tr \left\{\bf{H_2P_2}{(\bf{H_2P_2})}^{\dag}\right\}snr \\
- Tr\left\{(\bf{H_1P_1}{(\bf{H_1P_1})}^{\dag})^{2}\right\}{snr^2} \\
- Tr\left\{(\bf{H_2P_2}{(\bf{H_2P_2})}^{\dag})^{2}\right\}{snr^2}+ \\
+Tr\left\{\bf{H_1P_1}{(\bf{H_1P_1})}^{\dag}\bf{H_2P_2}{(\bf{H_2P_2})}^{\dag}\right\}{snr^2} \\
-Tr\left\{\bf{H_2P_2}{(\bf{H_2P_2})}^{\dag}\bf{H_1P_1}{(\bf{H_1P_1})}^{\dag}\right\}{snr^2}+\mathcal{O}(snr^3)
\end{multline}
\normalsize
Therefore, Theorem~\ref{LOWSNR} has been proved.
\bibliographystyle{IEEEtran}
\bibliography{IEEEabrv,mybibfile}
\end{document}